\newtheorem{dfn}{Definition}
\newtheorem{thm}{Theorem}
\newtheorem{prop}{Proposition}
\newtheorem{lemma}{Lemma}
\newtheorem{corollary}{Corollary}
\newcommand{\R}[0]{\mathds{R}} 
\renewcommand{\vec}[1]{{\boldsymbol{{#1}}}} 
\DeclareMathOperator*{\argmax}{\arg\!\max}
\title{Pure Bayesian Nash Equilibrium for Bayesian Games with Multidimensional Vector Types and Linear Payoffs}
\author{ 
    \href{https://orcid.org/0009-0000-2571-2889}{
        \includegraphics[scale=0.06]{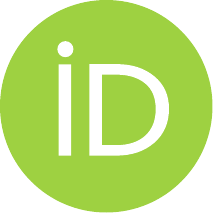}
        \hspace{1mm}Sébastien Huot
    }\\
    Department of Computing\\
    Imperial College London\\
    London, United Kingdom\\
    \texttt{sebastien.huot@protonmail.com} \\
    \And
    \href{https://orcid.org/0000-0002-6211-1991}{
        \includegraphics[scale=0.06]{orcid.pdf}
        \hspace{1mm}Abbas Edalat
    } \\
    Department of Computing\\
    Imperial College London\\
    London, United Kingdom\\
    \texttt{a.edalat@imperial.ac.uk} \\
}
\begin{document}
\maketitle

\begin{abstract}
    We study $n$-agent Bayesian Games with $m$-dimensional vector types and linear payoffs, also called Linear Multidimensional Bayesian Games \citep{krishna}. This class of games is equivalent with $n$-agent, $m$-game Uniform Multigames introduced in \citep{multigames}. We distinguish between games that have a discrete type space and those with a continuous type space. More specifically, we are interested in the existence of pure Bayesian Nash Equilibrium for such games and efficient algorithms to find them. For continuous priors we suggest a methodology to perform Nash Equilibrium search in simple cases. For discrete priors we present algorithms that can handle two actions and two players games efficiently. We introduce the core concept of threshold strategy and, under some mild conditions, we show that these games have at least one pure Bayesian Nash Equilibrium. We illustrate our results with several examples like Double Game Prisoner Dilemna (DGPD), Chicken Game and Sustainable Adoption Decision Problem (SADP). 
\end{abstract}

\keywords{Multigame \and Bayesian \and Linear payoff \and Multidimensional \and Pure Bayesian Nash Equilibrium}

Experimental results are accessible at this \href{https://github.com/huot-s/pure_ne_multigames}{repository}.

\section{Introduction}

Game theory provides an abstract framework to model a broad range of decision-making scenarios in real-life situations. From social cooperation \citep{cooperation_application} to biological evolution \citep{friedman1998} and economics \citep{shapiro1989} we can find models of games that can help to predict, or at least explain, the decision-makers' behaviour. \say{Game theory is a bag of analytical tools designed to help us understand the phenomena that we observe when decision-makers interact}\citep{osborne}. A \textit{game} is designed to model any situation involving \textit{decision-makers} (players, agents) that are \textit{rational} and reason \textit{strategically}. An agent is rational when it has a well-defined objective, is able to say which situation it prefers and
attempts to maximize its payoff. Strategic reasoning is to  use our
knowledge in the best possible way to anticipate the other agents’ action and take it into account in the decision process. The \textit{knowledge} of an agent is the information that it has
before making its decision. An information only known by one player is said to be
\textit{private knowledge}. When it is shared among a group of players, we say that this is
\textit{common knowledge}.\\

More formally, we assume agents are indexed by a set $I$. We denote the set of actions of agents by $A$, the set of outcomes by $O$ and consider an outcome function $f:A \rightarrow O$ giving the outcome of each action. The preferences of any agent are specified by the maximization of a utility (or payoff) function $u:O\rightarrow \R$. The simplest form of game is the \textit{Strategic Game} given in Definition~\ref{dfn:strategic_game}. An \textit{action profile} $a = (a_i)_{i \in I}$ is an outcome of the game. We use the index $-i$ to designate \say{all players except $i$} and write $a = (a_i, a_{-i})$ for $a \in A$.

\begin{dfn}\label{dfn:strategic_game}
A \textit{Strategic Game} $G = \left<I,(A_i)_{i \in I},(u_i)_{i \in I}\right>$ is made of:
\begin{enumerate}
    \item A finite set $I = \{ 1, 2, 3, ... , n\}$ of $n$ agents.
	\item For each agent $i \in I$ a nonempty set $A_i$ of possible actions. If $A_i$ is finite for all agents, then we say that the game is finite.
	\item A set $A = \prod_{i \in I} A_i$ of possible outcomes.
	\item A set of utility functions $u_i: A \rightarrow \R$ specifying the preferences of agent $i$.
\end{enumerate}
\end{dfn}

A \textit{Nash Equilibrium} (NE) \say{captures a steady state} \citep{osborne} of a game in which all agents have no incentive to deviate from their action unilaterally (Definition~\ref{dfn:nash_equ}).

\begin{dfn}\label{dfn:nash_equ}
A \textit{Nash Equilibrium} of the strategic game $G = \left<I,(A_i)_{i \in I},(u_i)_{i \in I}\right>$ is an action profile $a^* \in A$ such that:
\[ \forall i \in I \quad \forall a_i \in A_i \quad u_i(a_i^*,a_{-i}^*) \geq u_i(a_i,a_{-i}^*)\]
\end{dfn}
For a set $X$, let $\mathcal{P}(X)$ denote its set of subsets.
\begin{dfn}
Given an agent $i$ and the action $a_{-i} \in A_{-i}$ , the best-response function $B_i: A_{-i} \rightarrow \mathcal{P}(A_i)$ is such that:
\[ B_i(a_{-i}) = \{ a_i \in A_i \mid u_i(a_i,a_{-i}) \geq u_i(a_i',a_{-i}) \quad \forall a_i' \in A_i \} \]
\end{dfn}
This implies that, $a^*$ is a Nash Equilibrium if and only if $a^* \in B(a^*) := \left( B_i(a_{-i}) \right)_{i \in I}$. \\

The Prisoner Dilemma (PD) \citep{flood} is a widely used toy example that models a simple situation of self-interest driven decision and illustrates the case where the Nash Equilibrium is not efficient in the sense of Pareto \citep{osborne}. Extended to more than two agents, it's often associated with \textit{the tragedy of the commons} \citep{tradegy}. It is a situation in which \say{individuals, who have open access to a shared resource act independently according to their own self-interest and, contrary to the common good of all users, cause depletion of the resource through their uncoordinated action  \citep{tragedy_hardin}.} 

However, this behaviour may not be very common in real life as stated by Poundstone using the example of newspaper boxes in New Zealand \citep{poundstone}. Indeed, the PD can even fail to model real life examples.

Consider for example the TV show \say{Friend or Foe?} aired between 2002 and 2005 in the US. Two players who previously failed to answer some questions have to play \say{The Trust Box}: if they both choose Friend, they share a specified amount of money. If they both choose Foe, the money is lost. And if one plays Friend while the other plays Foe, the player choosing Foe wins all the money. Although (Friend, Friend) is not a NE, data collected from this kind of game showed that in such situation, \say{cooperation is surprisingly high} \citep{cooperative_behaviour}. Experience suggests that when a significant amount of money is at stake players tend to cooperate. Indeed, humans are not only driven by material considerations but also moral and sociocultural matters.

\subsection{Bayesian games} Concrete situations usually involve uncertainty at multiple levels so Strategic Games with perfect information (such as the Prisonner Dilemna or Hawk-Dove \citep{szekely2010}) may not be sufficient to describe the game. Usually we denote this unknown information by \textit{state of nature} and model it with a \textit{state space} $\Omega$ \citep{osborne}. Suppose each agent has some \textit{private information} (or \textit{private knowledge}): this information, called the \textit{type} $\theta_i$ of agent $i$, is unknown to the other agents, who can only have a (subjective) probability distribution over such possible types for an agent. So we assume that they have in mind a \textit{type space} $\Theta$ and a probability measure over $\Theta$, and, following \textit{von Neumann and Morgenstern}'s theory \citep{neumann_theory_of_games}, they play $a^*$ that maximizes the expected value of $u(f(a^*,\theta))$ subject to $\theta \in \Theta$. 

A game $G$ is \textit{Bayesian} \citep{harsanyi, harsanyi2, harsanyi3} when there is uncertainty for the agents (as in Definition~\ref{dfn:bayesian_game}). In this situation, an agent can no longer anticipate with certainty the output of the game because some information about the game is unknown. Variance of the outcome is generally associated to \textit{risk} or \say{how much we could diverge from what we expect}. A human player who wants to maximize his/her payoff without taking too much risk could be modeled by optimizing a combination of the expected value and the associated variance as seen in portfolio management models \citep{luenberg1998}. 

Bayesian Games also introduce the concept of \textit{strategy} which varies over the type of an agent. Indeed, the action of an agent depends on all the information it has (its own type) at the moment of the decision. We distinguish \textit{pure} strategies $s_i : \Theta_i \rightarrow A_i$ from \textit{mixed} strategies $\sigma_i : \Theta_i \rightarrow \Delta A_i$, where $\Delta X$ denotes the set of all probability distributions on a set $X$. In the latter case, the action played is not deterministic anymore but chosen randomly according to a probability distribution.\\

\begin{dfn}\label{dfn:bayesian_game}
A \textit{Bayesian Game} G is a game in strategic form with incomplete information which we denote by $G = \left<I, (A_i, \Theta_i, u_i, p_i(\cdot))_{i\in I}\right>$ where:
\begin{enumerate}
    \item $I = \{1, . . . , n\}$ is the set of agents.
	\item $A_i$ is agent $i$’s action set and $A = \prod_{i\in I} A_i$ is the set of action outcomes or action profiles.
	\item $\Theta_i$ is agent $i$’s type space and $\Theta = \prod_{i\in I} \Theta_i$ is the set of type profiles.
	\item $u_i : A \times \Theta \rightarrow \R$ is agent $i$’s utility for each $i\in I$.
	\item $p_i: \Theta \rightarrow \R$ is a (subjective) joint probability on $\Theta$ for each $i\in I$.
\end{enumerate}
\end{dfn}

For such games we can derive the notion of Bayesian Nash Equilibrium from the NE of strategic games where we maximize an expected utility $\overline{u_i}$ subject to agents' types. We also distinguish pure and mixed Nash Equilibria. Thanks to the Nash Theorem (Theorem~\ref{thm:nash_theorem}), we know that there always exists a mixed strategy Nash Equilibrium (denoted by \textit{mixed NE}). However, this theorem doesn't give a practical method to find this Nash Equilibrium. In fact, finding a Nash Equilibrium is often very complex and cannot be computed in a reasonable time. In 1994, Christos H. Papadimitriou introduced the complexity class PPAD (\textit{Polynomial Parity Arguments on Directed graphs}) \citep{complexity_NE} in order to classify the problem of finding a Nash Equilibrium. Later, he also showed that the problem of finding a Nash Equilibrium for a finite game is PPAD-complete \citep{complexity_NE_2}.\\

\begin{thm}[Nash Theorem \citep{osborne}]\label{thm:nash_theorem}
Every finite strategic game has a mixed strategy Nash equilibrium.
\end{thm}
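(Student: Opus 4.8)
The plan is to follow the classical route: pass to the mixed extension of $G$ and apply a fixed-point theorem to the (mixed) best-response correspondence. First I would replace each finite action set $A_i$ by the simplex $\Delta A_i$ of probability distributions over it; this is a nonempty, compact, convex subset of a Euclidean space, and so is the product $\Delta := \prod_{i\in I}\Delta A_i$. For a mixed profile $\sigma = (\sigma_i)_{i\in I}$, the natural notion of expected payoff is
\[ \overline{u_i}(\sigma) \;=\; \sum_{a\in A}\Big(\prod_{j\in I}\sigma_j(a_j)\Big)\,u_i(a), \]
which is a polynomial in the entries of $\sigma$, hence continuous on $\Delta$, and --- the property that does the real work --- \emph{affine} in $\sigma_i$ once $\sigma_{-i}$ is held fixed.

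Next I would mirror the best-response correspondence $B_i$ of the preceding definition in the mixed setting: put $B_i(\sigma_{-i}) = \argmax_{\sigma_i\in\Delta A_i}\overline{u_i}(\sigma_i,\sigma_{-i})$, regarded as a map $\prod_{j\neq i}\Delta A_j \to \mathcal{P}(\Delta A_i)$, and verify that (i) each value $B_i(\sigma_{-i})$ is nonempty, because a continuous function attains its maximum on the compact set $\Delta A_i$; (ii) each value is convex, because the set of maximizers of an affine function over a convex set is convex; and (iii) the product correspondence $B := \prod_{i\in I}B_i$ on $\Delta$ has closed graph. For (iii) I would argue sequentially: if $\sigma^{(k)}\to\sigma$ and $\tau^{(k)}\to\tau$ with $\tau^{(k)}_i\in B_i(\sigma^{(k)}_{-i})$, then letting $k\to\infty$ in $\overline{u_i}(\tau^{(k)}_i,\sigma^{(k)}_{-i}) \ge \overline{u_i}(\rho_i,\sigma^{(k)}_{-i})$, valid for every $\rho_i\in\Delta A_i$, yields $\tau_i\in B_i(\sigma_{-i})$ by continuity of $\overline{u_i}$.

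Having checked that $\Delta$ is nonempty, compact and convex and that $B$ has nonempty convex values and closed graph, Kakutani's fixed-point theorem supplies a $\sigma^*\in\Delta$ with $\sigma^*\in B(\sigma^*)$, i.e. $\overline{u_i}(\sigma^*_i,\sigma^*_{-i}) \ge \overline{u_i}(\sigma_i,\sigma^*_{-i})$ for all $i$ and all $\sigma_i\in\Delta A_i$. Because pure actions sit as the vertices of $\Delta A_i$ and $\overline{u_i}(\cdot,\sigma^*_{-i})$ is affine, this is equivalent to $\overline{u_i}(\sigma^*) \ge \overline{u_i}(a_i,\sigma^*_{-i})$ for every pure $a_i\in A_i$ --- precisely the mixed-strategy analogue of the equilibrium condition of Definition~\ref{dfn:nash_equ}. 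Hence $\sigma^*$ is a mixed Nash equilibrium.

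The only step that needs genuine care is (iii), the upper-hemicontinuity/closed-graph property --- everything else is routine manipulation of multilinear functions on simplices; one could alternatively invoke Berge's maximum theorem to get (i) and (iii) at once. If one prefers to avoid correspondences altogether, Nash's original argument is the fallback: build an explicit continuous self-map of $\Delta$ that reallocates each agent's probability mass toward actions with positive unilateral gain $\max\!\big(0,\;\overline{u_i}(a_i,\sigma_{-i})-\overline{u_i}(\sigma)\big)$, apply Brouwer's fixed-point theorem, and verify that the fixed points of this map are exactly the equilibria; this trades the hemicontinuity argument for a short but slightly delicate computation showing that a non-equilibrium cannot be fixed.
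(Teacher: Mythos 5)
Your argument is correct and is the standard Kakutani fixed-point proof of Nash's theorem; the paper itself gives no proof of this statement, simply citing it from Osborne and Rubinstein, where essentially this same argument (mixed extension, nonempty convex-valued best-response correspondence with closed graph, Kakutani) is the one given. So there is nothing to compare beyond noting that your write-up, including the verification of affineness in $\sigma_i$ and the closed-graph step, correctly fills in the proof the paper delegates to its reference.
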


Although finding a Nash Equilibrium is very hard in general, we can find some classes of games that have nice properties and an efficiently computable Nash Equilibrium \citep{rabinovich}.

\subsection{Linear multidimensional Bayesian games} 

For our study, we consider Linear Multidimensional Bayesian Games (Definition~\ref{dfn:linear_multidimensional_game}). They were initially introduced by Krishna and Perry to model multiple object auctions \citep{krishna}. This class of games was shown \citep{multigames} to be equivalent to another class of games called Uniform Multigames (Theorem~\ref{thm:equivalence_multigame}). A multigame (Definition~\ref{dfn:multigame}) is a game that is made of several local games that are played simultaneously by all agents. Each local game has its own payoff matrix and possible actions. For example an agent playing \say{Head or Tail} with one person and \say{Rock, Paper, Scissors} with another person simultaneously can be modeled by a multigame. Even if both local games are solved simultaneously, the agent is just playing on two \say{separate} games and tries to maximize a global utility which is a linear combination of local utilities. More specifically, for multigames in which every agent has the same set of actions in all local games, we can consider \textit{uniform} games: agents can only take the same action in all local games. When a multigame is uniform, an agent chooses only one strategy that is played identically in all local games. Said differently, one decision has multiple consequences and each agent wants to optimize their overall utility. We say that an agent chooses a global strategy that is applied on all the local games. Let $\R_+$ denote the set of non-negative real numbers.

\begin{dfn}[Linear Multidimensional Bayesian Games]\label{dfn:linear_multidimensional_game}
A Bayesian game $G$ is {\em $m$-dimensional} if the type space of each agent is a bounded subset of $\R^m_+$. When the positive integer $m > 1$ is implicitly given, we say $G$ is multidimensional. A multidimensional Bayesian game is linear if the utility of each agent only depends linearly on its own type components, i.e., there exists $L_i(s_i, s_{-i}) \in \R^m$ such that $u_i(s_i, s_{-i}, \theta_i, \theta_{-i}) = \sum_{j \in J} \left(L_i(s_i, s_{-i})\right)_j \theta_{ij}$.
\end{dfn}

\begin{dfn}\label{dfn:multigame}
A multigame  $$G =\langle I,J,(w_i)_{i \in I}, (G_j)_{j \in J},(\Theta_i)_{i \in I},(A_{ij}, u_{ij})_{i \in I, j \in J},p(\cdot)) \rangle$$ is a game in strategic form with incomplete information with the following structure:
\begin{enumerate}
    \item The set of agents $I=\{1,...,n\}$.
    \item The set of $n$-agent basic games is given by $G_j$, where $j \in J = \{1,...,m\}$ with action space $A_{ij}$ and utility function $u_{ij}$ for each agent $i \in I$ in the game $G_j$.
    \item Agent's $i$'s strategy is $s_i = (s_{ij})_{j \in J} \in S_i = \prod_{j \in J}A_{ij}$ where $s_{ij}$ is agent $i$'s action in $G_j$.
    \item Agent $i$'s type is $\theta_i = (\theta_{ij})_{j \in J} \in \Theta_i$ with $\theta_{ij} \geq 0$, $w_i > 0$ and $\sum_{j \in J}{\theta_{ij}} \leq w_i$
    \item Agent $i$'s utility for the strategy profile $(s_i,s_{-i})$ and type profile $(\theta_i, \theta_{-i})$ depends linearly on its types: $$u_i(s_i,s_{-i},\theta_i,\theta_{-i}) = \sum_{j \in J}{\theta_{ij}u_{ij}(s_{1j},...,s_{nj})}$$
    \item The agents' type profile $\theta = (\theta_1,...,\theta_n) \in \prod_{i \in I}\Theta_i$ is drawn from a given joint probability distribution $p(\theta)$. For any $\theta_i \in \Theta_i$, the function $p(\cdot \mid \theta_i)$ specifies a conditional probability distribution $\Theta_{-i}$ representing what agent $i$ believes about the types of the other agents if its own type where $\theta_i$.
\end{enumerate}
\end{dfn}

\begin{thm}\label{thm:equivalence_multigame}~\citep{multigames}
Suppose $G$ is a linear $m$-dimensional Bayesian game with a bounded type space $\Theta_i \subset \R^m_+$ for each $i \in I$, then $G$ is equivalent with a uniform multigame with $m$ basic games.
\end{thm}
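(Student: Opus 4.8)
The plan is to give an explicit, payoff-preserving translation from linear $m$-dimensional Bayesian games to uniform multigames (and back), and then to observe that the only point requiring an argument is the existence of admissible weights $w_i$, which is exactly what boundedness of the type spaces delivers.

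First I would fix a linear $m$-dimensional Bayesian game $G = \langle I, (A_i,\Theta_i,u_i,p_i(\cdot))_{i\in I}\rangle$ with $u_i(s_i,s_{-i},\theta_i,\theta_{-i}) = \sum_{j\in J}\bigl(L_i(s_i,s_{-i})\bigr)_j\theta_{ij}$ and $\Theta_i\subset\R^m_+$ bounded, and build the candidate uniform multigame $G'$ on the same agent set $I$ with basic games indexed by $J=\{1,\dots,m\}$. In every basic game $G_j$ I would give agent $i$ the \emph{same} action set $A_{ij}:=A_i$ (this realises the uniformity: a strategy is a single action replicated across all $m$ basic games) and the local payoff $u_{ij}(s_1,\dots,s_n):=\bigl(L_i(s_i,s_{-i})\bigr)_j$, keeping the prior $p(\cdot)$ and its conditionals $p(\cdot\mid\theta_i)$ unchanged. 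Since $\Theta_i$ is bounded, $\sup_{\theta_i\in\Theta_i}\sum_{j\in J}\theta_{ij}$ is finite, so I can choose any $w_i>0$ at least this large; then every $\theta_i\in\Theta_i$ satisfies $\theta_{ij}\ge0$ and $\sum_{j\in J}\theta_{ij}\le w_i$, meeting the requirement of Definition~\ref{dfn:multigame}. A uniform strategy profile of $G'$ — agent $i$ playing one action $s_i\in A_i$ in all basic games — then has multigame utility $\sum_{j\in J}\theta_{ij}u_{ij}(s_1,\dots,s_n) = \sum_{j\in J}\bigl(L_i(s_i,s_{-i})\bigr)_j\theta_{ij} = u_i(s_i,s_{-i},\theta_i,\theta_{-i})$, so $G$ and $G'$ have identical agents, identical pure and mixed strategy sets, and identical (expected) payoffs; in particular they have exactly the same Bayesian Nash equilibria, which is the sense in which they are equivalent.

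For the converse direction — and to see that the correspondence is genuinely a bijection between the two classes — I would run the same substitution backwards: a uniform multigame already has bounded type spaces $\Theta_i\subseteq\{\theta\in\R^m_+ : \sum_{j\in J}\theta_{ij}\le w_i\}$, and its utility $\sum_{j\in J}\theta_{ij}u_{ij}(s_1,\dots,s_n)$ is linear in the own type with coefficient vector $L_i(s_i,s_{-i})$ given by $\bigl(L_i(s_i,s_{-i})\bigr)_j=u_{ij}(s_1,\dots,s_n)$, so it is a linear $m$-dimensional Bayesian game; composing the two translations recovers the original game up to the harmless choice of $w_i$. The step I expect to need the most care is not a computation but a definitional one: making precise what ``equivalent'' should mean here (same players, strategy spaces, payoff maps, hence same equilibria) so that the one-line payoff identity above actually closes the argument; after that, the remaining content is the bookkeeping that the action sets $A_{ij}$ do not depend on $j$ and that a valid weight $w_i$ exists, both of which are immediate.
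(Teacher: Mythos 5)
Your proposal is correct and follows what is essentially the canonical argument: the paper itself only states this theorem by citation to \citep{multigames} without reproducing a proof, and the construction there is the same direct translation you give, namely defining the local payoffs by $u_{ij}(\cdot) := \bigl(L_i(\cdot)\bigr)_j$, keeping the prior, using boundedness of $\Theta_i \subset \R^m_+$ to pick admissible weights $w_i$, and checking the one-line payoff identity (with the reverse substitution giving the converse). Your added care about what ``equivalent'' means (same players, strategy maps from types to actions, payoffs, hence same Bayesian equilibria) is exactly the right point to make explicit, so nothing is missing.
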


Note that according to Definition~\ref{dfn:multigame}, the type can be any $m$-dimensional vector with positive coefficients. We say that a multigame is normalized when $\forall i \in I, w_i = 1$. Any multigame can be converted into a normalized multigame by adding a well-chosen local game \citep{multigames}. Thus we can assume without loss of generality that multigames are normalized and such that for all agents their type's coefficients add up to 1. As mentionned previously, this paper focuses on uniform multigames that have the following two basic features: (1) action sets are identical in all local games for every agent $i\in I$, i.e., $\forall j \in J, A_{ij} = A_i$ and (2) each agent plays the same action in all basic games $G_j$, i.e. $S_i = \{ (s,s,...,s) \mid s \in A_i \}$. Additionally, we make the assumption that the agents' types are independent $$\forall i \in I, \quad p_i(\theta_{-i} \mid \theta_i) = p_i(\theta_{-i})$$

We aim to show, beyond what is provided in~\citep{multigames}, that the multigame framework can model a wide range of complex situations that are worth exploring such as coordinated international environmental and social actions. First, take the situation in which two countries could either keep their traditional unsustainable production system or shift to a more responsible production. The shift is only beneficial if both countries do it. But each country can be tempted to keep the old traditional production system that doesn’t incur an additional cost and is more efficient and profitable (at least this is what agents trust). This situation falls into the original Prisoner Dilemma framework and we end up with both countries keeping their traditional production (assuming that they act rationally). 

Now, we extend this situation to $N$ companies in the same geographical area. We assume that each company that continues to keep an unsustainable production unit causes a  pollution cost of $1$ unit. Thus the (shared) pollution cost $k$ is the number of companies that don't shift their production unit, $0 \leq k \leq N$. On the other hand, a company that shifts its production unit for a more sustainable one has to pay an additional (fixed) cost $c \ll N$. Thus $c_i = k + c$ for a company $i$ that shifts its production and $c_i = k$ for a company that keeps it. If all $N$ agents shift, they all end up with a (small) cost $c$ and if no agent does so, they all end up with a (high) cost of $N$. Therefore, it is in the common interest that everyone shifts their production unit. But as each individual company has no incentive to deviate from not shifting, the only Nash Equilibrium of this game is that they all avoid the shift. The latter example can handle any number of companies but remains very limited as we cannot include the particularities of the agents (types) and the uncertainty about what other companies value the most (i.e., the Bayesian aspect). 

Later on, we study a more powerful approach through the multigames framework. We call it the \textit{Sustainable Adoption Decision Problem} (SADP): $n$ independent countries share $m$ possible concerns like population well-being, air pollution, economic stability, education or preserving biodiversity. Each country has its own (subjective) priorities characterized by an $m$-dimensional vector and has to choose between keeping their current lifestyle or radically shifting to a more sustainable one.

We denote by $G_{m,n,a}$ a \textit{uniform multigame} with $n$ agents, $m$ local games and $a$ number of possible actions assuming that all agents have the same number of possible actions. When we have games with 2 actions ($a=2$), we call these two actions C (cooperation) and D (defection). In this paper, we study $G_{m,n,2}$ multigames with continuous type space and show that with a simple condition on local games, the existence of a pure Bayesian Nash Equilibrium is guaranteed. At the same time, we define the notion of threshold strategy and discuss the possibility to extend it to any number of actions. Then, we operate the same kind of analysis for discrete games. For both continuous and discrete type spaces we define a particular kind of multigame $G_{2,2,2}$ that is called \textit{Double Game Prisoner Dilemma} (DGPD). In the last part, we propose algorithms that can efficiently find NE in particular situations and formulate postulates by exploring some properties of multigames.

\subsection{Related work}
Thanks to Nash Theorem, we know that any (finite) game has (at least) one Nash Equilibrium. But the number of equilibrium(s) and their nature is not described in general. There have been many researches to identify particular examples or sets of games that showcase patterns in the number or nature of their NE. For instance, \citep{EWERHART2021418} identifies that N-player Colonel Blotto games with incomplete information have Bayes–Nash equilibrium in which the resource allocation is strictly monotone. Probabilistic approaches \citep{random_DRESHER1970134, random_RINOTT2000274, random_STANFORD1995238, random_STANFORD1997115, random_STANFORD199929} have also been proposed, where the games are randomly generated under some constraints. The condition of \say{coarser inter-player information} introduced by \citep{HE201911} is shown to be necessary and sufficient to have pure Bayesian Nash Equilibrium. In \citep{EINY2023341}, the existence of pure-strategy equilibrium is proven for potential Bayesian games \citep{MONDERER1996124} with absolutely continuous information and a Bayesian potential that is upper semi-continuous in actions for any realization of the players' types. Similarly, we showcase a specific set of games that have pure Bayesian Nash Equilibrium and identify examples that illustrate our results.

\section{Uniform multigames with continuous type space}

\subsection{General remarks}

We use the standard Lebesgue measure on finite dimensional Euclidean spaces. When the type space is continuous, there are 3 possibilities: the probability distribution is either discrete, continuous or a mixture of both. If the probability distribution is fully discrete, we fall into the discrete type space case that we study later on. We choose to exclude the mixture case so that the distribution over the type space has no atomic value.\\

 We suppose that the probability distribution for the game is absolutely continuous with respect to the Lebesgue measure: Denote the Lebesgue measure on $\R^m$ by $\lambda_m$; if $p$ is a  probability distribution on $\R^m$, then the probability distribution $p$ is absolutely continuous with respect to $\lambda_m$ if for any measurable set $E$, we have: $\lambda_m(E) = 0 \implies p(E) = 0$. Let $p_i$ be the probability distribution for agent $i\in I$. We assume that we can use a probability density function: recall that, for any measurable set $E$, the probability density function $f_i$ satisfies: $$p_i(\theta_{-i} \in E) = \int_E f_i(\vec\theta_{-i})d\vec\theta_{-i}$$
In our case, $p_i(\vec\theta_{-i} \in E)$ is the probability, according to agent $i$, that the other agents have a type in the set $E \subset \Theta_{-i}$ and $f_i$ is the associated density function.\\

The type space $\Theta$ is assumed to be a compact subset of $\R^m$ and, thanks to the fact that the multigame is normalized, we have $dim(\Theta_i) = m-1$. \\

A pure strategy for agent $i$ is denoted by $s_i$, a mixed strategy by $\sigma_i$. We use $\sigma_i$ if we don't know a priori the nature of the strategy. We recall that for a pure strategy $s_i:\Theta_i \rightarrow A_i$ the agent plays an action for a given type. For a mixed strategy $\sigma_i : \Theta_i \rightarrow \Delta A_i$, the agent follows a probability for a given type where $\sigma_i(\vec\theta_i, a_i)$ is the probability of agent $i$ playing $a_i$ when their type is $\vec\theta_i$.\\

\subsection{Threshold strategy}

First, consider agent $i$'s expected utility $\overline{U}_i(a_i, \vec\theta_i, \sigma_{-i})$ given that it plays action $a_i$, has type $\vec\theta_i$ and other agents follow the strategy $\sigma_{-i}$:

\begin{align}
    \overline{U}_i(a_i, \vec\theta_i, \sigma_{-i}) = \int_{\vec\theta_{-i}}f_i(\vec\theta_{-i})U_i(a_i, \vec\theta_i, \sigma_{-i}(\vec\theta_{-i})) \,d\vec\theta_{-i}
\end{align}

$U_i(a_i, \vec\theta_i, \sigma_{-i}(\vec\theta_{-i}))$ is agent $i$'s utility playing $a_i$ with type $\vec\theta_i$ given that the others follow $\sigma_{-i}(\vec\theta_{-i})$. This utility can be expressed in terms of $U_i(a_i,\vec\theta_i, a_{-i})$ for $a_{-i} \in A_{-i}$:

\begin{align}
    U_i(a_i, \vec\theta_i, \sigma_{-i}(\vec\theta_{-i})) = \sum_{a_{-i}\in A_{-i}}U_i(a_i, \vec\theta_i, a_{-i})\sigma_{-i}(\vec\theta_{-i}, a_{-i})
\end{align}

As $G$ is a multigame, $U_i(a_i, \vec\theta_i, a_{-i})$ can be expressed as follows: 

\begin{align}
    U_i(a_i, \vec\theta_i, a_{-i}) = \sum_{j \in J}u_{ij}(a_i, a_{-i})\theta_{ij}
\end{align}

We define:

\begin{align}
    \zeta_i^{a_{-i}}(\sigma_{-i}) &:= \int_{\vec\theta_{-i}}f_i(\vec\theta_{-i})\sigma_{-i}(\vec\theta_{-i}, a_{-i})\,d\vec\theta_{-i} \\
    \overline{u}_{ij}(a_i, \sigma_{-i}) &:= \sum_{a_{-i}}u_{ij}(a_i, a_{-i})\zeta_i^{a_{-i}}(\sigma_{-i})
\end{align}

Here, $\zeta_i^{a_{-i}}(\sigma_{-i})$ is the probability that $a_{-i}$ is played by others given that they follow strategy $\sigma_{-i}$. Thus $\overline{u}_{ij}(a_i, \sigma_{-i})$ is the expected utility for agent $i$ in the local game $j$ if it plays action $a_i$ and others follow the strategy $\sigma_{-i}$. Using these, we can write:

\begin{eqnarray}
    \overline{U_i}(a_i, \vec\theta_i, \sigma_{-i}) &=& \int_{\vec\theta_{-i}}f_i(\vec\theta_{-i})\sum_{a_{-i}}\sigma_{-i}(\vec\theta_{-i}, a_{-i})\sum_{j \in J}u_{ij}(a_i, a_{-i})\theta_{ij}\,d\vec\theta_{-i}\\
    &=& \sum_{j \in J}\theta_{ij}\sum_{a_{-i}}u_{ij}(a_i, a_{-i})\int_{\vec\theta_{-i}}f_i(\vec\theta_{-i})\sigma_{-i}(\vec\theta_{-i}, a_{-i})\,d\vec\theta_{-i} \\
    &=& \sum_{j \in J}\theta_{ij}\sum_{a_{-i}}u_{ij}(a_i, a_{-i})\zeta_i^{a_{-i}}(\sigma_{-i}) \\
    &=& \sum_{j \in J}\theta_{ij}\overline{u}_{ij}(a_i, \sigma_{-i}) \\
    &=& \vec\theta_i \cdot (\overline{u}_{ij}(a_i, \sigma_{-i}))_{j \in J}
\end{eqnarray}

We thus have a more compact and explicit expression of the expected utility $\overline{U_i}(a_i, \vec\theta_i, \sigma_{-i})$. Indeed, it can be computed as the scalar product of the vector type of agent $i$ and the vector of the expected utilities for local games.\\

So far, we have kept game parameters $m, n$ and $a$ as general as possible, the previous expression holds for any choice of those parameters. To go further on with the analysis, we consider that $a = 2$. We will discuss in the conclusion section the extension to any number of actions.\\

We aim to evaluate whether agent $i$ with type $\vec\theta_i$ and opponents' strategy $\sigma_{-i}$ prefers to play $C$ or $D$. We compare $\overline{U_i}(C, \vec\theta_i, \sigma_{-i})$ and $\overline{U_i}(D, \vec\theta_i, \sigma_{-i})$:

\begin{eqnarray}
    \overline{U_i}(C, \vec\theta_i, \sigma_{-i}) - \overline{U_i}(D, \vec\theta_i, \sigma_{-i}) &=& \sum_{j \in J}\theta_{ij}(\overline{u}_{ij}(C, \sigma_{-i}) - \overline{u}_{ij}(D, \sigma_{-i})) \\
    &=& \sum_{j \in J}\theta_{ij}\delta_{ij}(\sigma_{-i}) \\
    &=& \vec\theta_i \cdot \vec\delta_i(\sigma_{-i})
\end{eqnarray}
where $\delta_{ij}(\sigma_{-i}) = \overline{u}_{ij}(C, \sigma_{-i}) - \overline{u}_{ij}(D, \sigma_{-i})$ and $\vec \delta_i(\sigma_{-i}) = (\delta_{ij}(\sigma_{-i}))_{j \in J}$. This difference expressed as a scalar product indicates the best action for agent $i$: if it is strictly positive then the best action is $C$, if it is strictly negative then the best action is $D$ and if it is equal to zero then any mixed combination of $C$ and $D$ is a best response.

\begin{dfn}[Threshold strategy]\label{TS}
    The vector $\vec\delta_i$ is called agent $i$'s {\em threshold}. A {\em threshold strategy} $\sigma_i$ with threshold $\vec\delta_i$ for agent $i$ is a strategy such that:
    \begin{align}
        \sigma_i(\vec\theta_i) = 
        \begin{cases}
            C &\text{$\vec \theta_i \cdot \vec \delta_i > 0$} \\
            D &\text{$\vec \theta_i \cdot \vec \delta_i < 0$} \\
            \alpha_i(\vec \theta_i) C + (1 - \alpha_i(\vec \theta_i)) D &\text{$\vec \theta_i \cdot \vec \delta_i = 0$},
        \end{cases} 
    \end{align}
    where $\alpha_i(\vec\theta_i)\in [0,1]$. Such a strategy is also denoted by $(\sigma_i, \vec \delta_i, \alpha_i)$.
\end{dfn}

 The first and second cases ($\vec \theta_i \cdot \vec \delta_i \neq 0$) are called \textit{pure components} of the strategy and the last case ($\vec \theta_i \cdot \vec \delta_i = 0$) is called the \textit{mixed component}. Notice that, as a direct consequence of what we previously said, a best response is always a threshold strategy and thus any Bayesian Nash Equilibrium $\sigma=(\sigma_1, \sigma_2, ..., \sigma_n)$ is exclusively made of threshold strategies. Also, a threshold strategy is said to be {\em pure} when $\forall \vec\theta_i$, $\alpha_i(\vec\theta_i) \in \{0, 1\}$.

\subsection{Existence of pure Bayesian Nash Equilibrium}

\begin{thm}\label{thm:main_theorem}
If a (normalized) uniform multigame $G_{m,n,2}$ with continuous type space and continuous prior has a mixed strategy Nash Equilibrium $(\sigma_1, \sigma_2, ..., \sigma_n)$ with non-zero threshold vectors for all agents then it has a pure Bayesian Nash Equilibrium.
\end{thm}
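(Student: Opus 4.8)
The plan is to \emph{purify} the given mixed equilibrium $(\sigma_1,\dots,\sigma_n)$: since a best response is always a threshold strategy, each $\sigma_i$ is already pure off its mixed component, and I will show that this mixed component is Lebesgue-negligible in the type space, so that redefining $\sigma_i$ there as a fixed pure action changes nothing relevant.

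Concretely, fix for each agent $i$ its threshold $\vec\delta_i := \vec\delta_i(\sigma_{-i})$, which is non-zero by hypothesis, and let $N_i := \{\vec\theta_i\in\Theta_i : \vec\theta_i\cdot\vec\delta_i = 0\}$ be the mixed component. Off $N_i$ the strategy $\sigma_i$ equals the pure action $C$ (where $\vec\theta_i\cdot\vec\delta_i>0$) or $D$ (where $\vec\theta_i\cdot\vec\delta_i<0$), so the candidate purification is $s_i(\vec\theta_i) := \sigma_i(\vec\theta_i)$ for $\vec\theta_i\notin N_i$ and $s_i(\vec\theta_i) := D$ for $\vec\theta_i\in N_i$; this is a measurable pure threshold strategy. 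The heart of the proof is to show $N_i$ is null for the $(m-1)$-dimensional Lebesgue measure on the type space. Here I would use that, $G$ being normalized, $\Theta_i$ lies in the affine hyperplane $H = \{\vec\theta_i : \sum_{j\in J}\theta_{ij}=1\}$, whereas $N_i$ is cut out of $\Theta_i$ by the \emph{linear} hyperplane $P = \{x : x\cdot\vec\delta_i = 0\}$ through the origin. Since $0\notin H$ the two hyperplanes never coincide, and comparing their directing subspaces shows $H\cap P$ is either empty (precisely when $\vec\delta_i$ is parallel to $(1,\dots,1)$, since then $\vec\theta_i\cdot\vec\delta_i$ is a non-zero constant on $\Theta_i$) or an affine subspace of dimension $m-2$; in both cases $N_i\subseteq H\cap P$ is $\lambda_{m-1}$-null.

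It then remains to check $(s_1,\dots,s_n)$ is an equilibrium, which splits into two routine points. First, the thresholds are preserved, $\vec\delta_i(s_{-i}) = \vec\delta_i(\sigma_{-i}) = \vec\delta_i$ for every $i$: indeed $\zeta_i^{a_{-i}}(\cdot)$, and hence $\overline{u}_{ij}(\cdot)$ and $\vec\delta_i(\cdot)$, are integrals against the density $f_i$, while $s_{-i}$ differs from $\sigma_{-i}$ only on $\bigcup_{k\neq i}\{\vec\theta_{-i} : \vec\theta_k\in N_k\}$, which is negligible in $\Theta_{-i}$ by Fubini and the absolute continuity of $p_i$. Second, given this invariance, the identity $\overline{U_i}(C,\vec\theta_i,s_{-i}) - \overline{U_i}(D,\vec\theta_i,s_{-i}) = \vec\theta_i\cdot\vec\delta_i$ shows the best responses of $i$ against $s_{-i}$ are exactly the threshold strategies with threshold $\vec\delta_i$; and $s_i$ is one of them, because off $N_i$ it coincides with the pure parts of $\sigma_i$ and on $N_i$ the difference vanishes, so $D$ is optimal. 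Hence no agent can profitably deviate and $(s_1,\dots,s_n)$ is a pure Bayesian Nash Equilibrium.

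I expect the negligibility of $N_i$ to be the one genuinely delicate step: it is exactly where the non-zero-threshold hypothesis and the normalization $\dim\Theta_i = m-1$ are both indispensable, and one must isolate the degenerate case in which the threshold hyperplane is parallel to the affine hull of the type space (there $N_i$ is empty rather than lower-dimensional). Everything afterwards is standard: modifying strategies on a null set leaves every expectation — and therefore every $\zeta_i^{a_{-i}}$, $\overline{u}_{ij}$ and $\vec\delta_i$ — unchanged, so the best-response characterisation associated with Definition~\ref{TS} transfers verbatim.
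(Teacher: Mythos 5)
Your proposal is correct and follows essentially the same route as the paper's proof: purify the mixed component of each threshold strategy, observe that the set $\{\vec\theta_i : \vec\theta_i\cdot\vec\delta_i = 0\}$ is null in the $(m-1)$-dimensional type space because $\vec\delta_i\neq 0$, and conclude that the $\zeta_i^{a_{-i}}$'s, hence the thresholds, are unchanged, so the purified profile remains an equilibrium. Your treatment is in fact slightly more careful than the paper's, explicitly isolating the degenerate case where the threshold hyperplane is parallel to the affine hull of the normalized type space (where the mixed component is empty rather than lower-dimensional), but the argument is the same.
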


\begin{proof}
 Consider a mixed Nash Equilibrium $\sigma=(\sigma_1, \sigma_2, ..., \sigma_n)$ that satisfies the condition given in the theorem. We show that starting from the mixed strategy $\sigma$ we can derive $s=(s_1, s_2, ..., s_n)$ a pure strategy Nash Equilibrium.

First, we notice that $\sigma_i$'s are threshold strategies $(\sigma_i, \vec \delta_i^*, \alpha_i)$ with threshold $\vec \delta_i^* = \vec \delta_i(\sigma_{-i})$ as they are best responses for each agent given the opponents' strategies. We suppose that there is at least one agent $i$ such that $\sigma_i$ is a mixed strategy ($\alpha_i \neq 0,1$) because otherwise $\sigma$ is already a pure strategy Nash Equilibrium and the proof is over. For each $i\in I$, derive the pure strategy $s_i$ from $\sigma_i$ by replacing the mixed component with a pure action, i.e., by setting $\alpha_i \in \{0, 1\}$. We now demonstrate that $s=(s_1, s_2, ..., s_n)$ is a Bayesian Nash Equilibrium for $G$.

By construction, $s_i$'s are threshold strategies with the same threshold as $\sigma_i$'s. So if we can show that $\vec \delta_i(\sigma_{-i}) = \vec \delta_i(s_{-i})$ for all agents $i$ then $s$ is a Bayesian Nash Equilibrium. For this purpose, we prove that: $$\forall i \in I, \forall a_{-i} \in A_{-i}. \quad \zeta^{a_{-i}}_i(\sigma_{-i}) = \zeta^{a_{-i}}_i(s_{-i})$$ by computing the difference:
\begin{align}
    \zeta^{a_{-i}}_i(\sigma_{-i}) - \zeta^{a_{-i}}_i(s_{-i}) = \int_{\vec\theta_{-i}}f_{-i}(\vec\theta_{-i})[\sigma_{-i}(\vec\theta_{-i}, a_{-i}) - s_{-i}(\vec\theta_{-i}, a_{-i})]\,d\vec\theta_{-i}.
\end{align}

Since $\vec \delta_i(\sigma_{-i})$ is a  non zero vector, the set $E_i = \{ \vec\theta_i \mid \vec\theta_i \cdot \vec \delta_i(\sigma_{-i}) = 0 \}$ is contained in a  hyperplane of $\R^{m-1}$. Thus, for any agent $i$, the set $\Theta_1 \times ... \times E_i \times ... \times \Theta_n$ has zero measure. Also, note that $\sigma_{-i}(a_{-i}, \vec\theta_{-i}) \neq s_{-i}(a_{-i}, \vec\theta_{-i})$ only when $\vec\theta_{-i} \in \bigcup_{k \neq i}\Theta_{-\{i,k\}} \times E_k$ where the latter set is the finite union of nullsets. In other words, $\sigma_{-i}(\vec\theta_{-i}, a_{-i})$ and $s_{-i}(\vec\theta_{-i}, a_{-i})$ are equal almost everywhere. So the difference expressed by the integral is zero. This shows that the constructed pure strategy $s$ is a Nash Equilibrium solution.\\
\end{proof}

Theorem \ref{thm:main_theorem} provides a pure Bayesian Nash Equilibrium for uniform multigames with 2 actions but it relies on a specific condition on mixed strategy solutions. In practice, we do not seek to enumerate all possible mixed solutions just to check whether any one of them has non-zero threshold vectors for all agents. Fortunately, we can find conditions that don't rely on the mixed solutions but can help us to determine whether there exists pure solutions.

\begin{dfn}[Purely cooperative/competitive local game]
    A local game $j \in J$ is said to be {\em purely cooperative} for agent $i \in I$ if $$\forall a_{-i}\in A_{-i}. \quad u_{ij}(C, a_{-i}) > u_{ij}(D, a_{-i})$$ and to be {\em purely competitive} if $$\forall a_{-i}\in A_{-i}. \quad u_{ij}(C, a_{-i}) < u_{ij}(D, a_{-i})$$
    
\end{dfn}

This notion is equivalent to the condition that C (or D) is a strictly dominant strategy for agent $i$ in game $j$.

\begin{prop}\label{prop:purely_competitive}
    An agent having at least one purely competitive (or cooperative) local game will always play a threshold strategy with a non-zero threshold, whatever the opponents' strategy $\sigma_{-i}$.
\end{prop}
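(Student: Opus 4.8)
The plan is to show that $\vec\delta_i(\sigma_{-i})$ can never be the zero vector by exhibiting a single coordinate that is always strictly nonzero, namely the coordinate corresponding to the purely competitive (or purely cooperative) local game. Recall from the earlier derivation that $\delta_{ij}(\sigma_{-i}) = \overline{u}_{ij}(C,\sigma_{-i}) - \overline{u}_{ij}(D,\sigma_{-i}) = \sum_{a_{-i}} \bigl(u_{ij}(C,a_{-i}) - u_{ij}(D,a_{-i})\bigr)\zeta_i^{a_{-i}}(\sigma_{-i})$. So the $j$-th component of the threshold is a convex combination (with weights $\zeta_i^{a_{-i}}(\sigma_{-i})$, which are nonnegative and sum to one since they are the probabilities of the opponents' action profiles) of the quantities $u_{ij}(C,a_{-i}) - u_{ij}(D,a_{-i})$.

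First I would fix an agent $i$ and let $j^*$ be an index of a purely competitive local game for $i$, so that $u_{ij^*}(C,a_{-i}) - u_{ij^*}(D,a_{-i}) < 0$ for every $a_{-i} \in A_{-i}$. Since $A_{-i}$ is finite, set $M := \max_{a_{-i}\in A_{-i}} \bigl(u_{ij^*}(C,a_{-i}) - u_{ij^*}(D,a_{-i})\bigr) < 0$. Then $\delta_{ij^*}(\sigma_{-i}) = \sum_{a_{-i}} \bigl(u_{ij^*}(C,a_{-i}) - u_{ij^*}(D,a_{-i})\bigr)\zeta_i^{a_{-i}}(\sigma_{-i}) \le M \sum_{a_{-i}}\zeta_i^{a_{-i}}(\sigma_{-i}) = M < 0$, using that $\sum_{a_{-i}}\zeta_i^{a_{-i}}(\sigma_{-i}) = 1$. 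Hence $\vec\delta_i(\sigma_{-i})$ has a strictly negative component and is therefore nonzero, regardless of $\sigma_{-i}$. The purely cooperative case is symmetric with all inequalities reversed, giving a strictly positive component. Since any best response of agent $i$ is a threshold strategy with threshold $\vec\delta_i(\sigma_{-i})$ (as observed just after Definition~\ref{TS}), the claim follows.

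The only point needing a little care is the normalization $\sum_{a_{-i}\in A_{-i}}\zeta_i^{a_{-i}}(\sigma_{-i}) = 1$: this holds because $\zeta_i^{a_{-i}}(\sigma_{-i}) = \int f_i(\vec\theta_{-i})\sigma_{-i}(\vec\theta_{-i},a_{-i})\,d\vec\theta_{-i}$, and for each fixed $\vec\theta_{-i}$ the numbers $\sigma_{-i}(\vec\theta_{-i},a_{-i})$ form a probability distribution over $A_{-i}$ (a product of the individual mixed strategies $\sigma_k(\vec\theta_k,\cdot)$), so summing over $a_{-i}$ and using $\int f_i = 1$ gives the total mass one. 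With that observation in hand, no genuine obstacle remains — the argument is essentially the elementary fact that a convex combination of numbers all lying in $(-\infty,0)$ (resp. $(0,\infty)$) stays strictly negative (resp. positive), applied coordinatewise to the single guaranteed-sign component.
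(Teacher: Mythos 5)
Your proof is correct and follows essentially the same route as the paper: compute the component $\delta_{ij^*}(\sigma_{-i})$ for the purely competitive (or cooperative) local game as a convex combination, with weights $\zeta_i^{a_{-i}}(\sigma_{-i})$ summing to one, of the strictly signed differences $u_{ij^*}(C,a_{-i})-u_{ij^*}(D,a_{-i})$, and conclude that this component is strictly nonzero, hence $\vec\delta_i(\sigma_{-i})\neq 0$. Your explicit justification that the weights sum to one is a small welcome addition, but the argument is otherwise the same.
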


\begin{proof}
    Assume that game $k$ is purely cooperative for agent $i$ (the same reasoning applies to a purely competitive game). Let's compute $\delta_{ik}(\sigma_ {-i})$:
    
    \begin{align}
        \delta_{ik}(\sigma_{-i}) = \sum_{a_{-i}}(u_{ik}(C, a_{-i}) - u_{ik}(D, a_{-i}))\zeta_i^{a_{-i}}(\sigma_{-i}).
    \end{align}
    Since the terms $\zeta_i^{a_{-i}}(\sigma_{-i})$  express probabilities, they are non-negative with $\sum_{a_{-i}}\zeta_i^{a_{-i}}(\sigma_{-i}) = 1$. Because the local game $k$ is purely cooperative for agent $i$, we have $u_{ik}(C, a_{-i}) - u_{ik}(D, a_{-i})>0$ for all $k\in J$. Thus the sum is strictly positive and $\delta_{ik}(\sigma_{-i}) > 0 $ which implies $\vec \delta_i(\sigma_{-i}) \neq 0$.

\end{proof}

In the light of Proposition~\ref{prop:purely_competitive} we easily deduce the following:

\begin{thm}\label{thm:practical_continuous_theorem}
Any normalized uniform multigame $G_{m,n,2}$ with continuous type space and continuous prior, with at least one purely competitive/cooperative local game for each agent, has a pure Bayesian Nash Equilibrium.
\end{thm}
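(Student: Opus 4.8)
The strategy is to combine the two earlier results — Theorem~\ref{thm:main_theorem} and Proposition~\ref{prop:purely_competitive} — via the Nash Theorem, using the multigame/strategic-game equivalence only implicitly. First I would invoke Theorem~\ref{thm:nash_theorem}: since $G_{m,n,2}$ is, after discretization of strategies into threshold strategies, effectively reducible to a finite strategic game (or, more directly, since a mixed Bayesian Nash Equilibrium is guaranteed to exist for such games), there is some mixed strategy Nash Equilibrium $\sigma = (\sigma_1,\dots,\sigma_n)$. The point to be careful about here is the existence of a mixed Bayesian Nash Equilibrium in the continuous-type setting; I would either cite the standard extension of Nash's theorem to Bayesian games with continuous types (this is where I anticipate the only real subtlety — one needs compactness of the type space, which is assumed, and the linearity/continuity of payoffs, which holds by Definition~\ref{dfn:linear_multidimensional_game}), or argue that a best response is always a threshold strategy parametrized by a finite-dimensional vector $\vec\delta_i$, so the equilibrium problem lives in a compact finite-dimensional space where a fixed point exists.

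Next, I would observe that in this equilibrium each $\sigma_i$ is a best response to $\sigma_{-i}$, hence a threshold strategy with threshold $\vec\delta_i(\sigma_{-i})$, exactly as noted after Definition~\ref{TS}. Now the hypothesis that each agent $i$ has at least one purely competitive or purely cooperative local game lets me apply Proposition~\ref{prop:purely_competitive} directly: for every agent $i$ and every opponent profile strategy, $\vec\delta_i(\sigma_{-i}) \neq 0$. In particular, the equilibrium $\sigma$ has non-zero threshold vectors for all agents.

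Finally, I would feed this $\sigma$ into Theorem~\ref{thm:main_theorem}: a mixed strategy Nash Equilibrium with non-zero threshold vectors for all agents yields a pure Bayesian Nash Equilibrium (obtained by resolving each mixed component $\alpha_i$ to an arbitrary value in $\{0,1\}$, the argument being that the resolution changes the opponents' strategies only on a finite union of hyperplanes, which are null sets under the absolutely continuous prior, so the thresholds are unchanged). This completes the deduction. The main obstacle, as flagged, is not any of these three steps individually but the legitimacy of the first one — ensuring a mixed Bayesian Nash Equilibrium exists in the continuous-type regime; everything after that is a clean concatenation of the stated results, which is presumably why the authors call it an easy deduction from Proposition~\ref{prop:purely_competitive}.
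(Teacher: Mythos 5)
Your proposal follows essentially the same route as the paper: Proposition~\ref{prop:purely_competitive} guarantees that every best response—hence every component of any mixed equilibrium—is a threshold strategy with non-zero threshold, and Theorem~\ref{thm:main_theorem} then converts such a mixed equilibrium into a pure one. The only difference is that you explicitly flag the existence of a mixed Bayesian Nash Equilibrium in the continuous-type setting as the delicate step, whereas the paper's proof leaves that existence implicit (it is effectively a hypothesis inherited from Theorem~\ref{thm:main_theorem}), so your extra caution is a refinement rather than a divergence.
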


\begin{proof}
    We note that, by Proposition~\ref{prop:purely_competitive}, all agents must play a threshold strategy with a non-zero threshold. So any mixed Nash Equilibrium must be made of threshold strategies with non-zero thresholds. Thus the conditions of Theorem~\ref{thm:main_theorem} hold, and a pure Bayesian Nash Equilibrium exists.
\end{proof}

\subsection{The Double Game Prisoner Dilemma}

In this subsection, we define and study a more particular type of multigame from \citep{pd_extension}  called Double Game Prisoner Dilemma (DGPD). On the discrete type space section we will also refer to such multigames.

\begin{dfn}
    The Double Game Prisoner Dilemma is a $(2,2,2)$-multigame (2 agents, 2 local games, 2 actions) such that the first local game is a Prisoner's Dilemma game and the second local game is a \say{social game} motivating cooperation with the following payoff matrices:
    
    \begin{table}[ht]
        \begin{minipage}{.5\linewidth}
            \centering
            \begin{tabular}{cc|c|c|}
                \multicolumn{2}{c}{} & \multicolumn{2}{c}{agent 2} \\
                \cline{3-4}
                & & $C$ & $D$\\ \cline{2-4}
                \multicolumn{1}{c}{\multirow{2}{*}{agent 1} } &
                \multicolumn{1}{|c|}{$C$} & $(r,r)$ & $(s,t)$\\ \cline{2-4}
                \multicolumn{1}{c}{}                        &
                \multicolumn{1}{ |c| }{$D$} & $(t,s)$ & $(p,p)$\\ \cline{2-4}
            \end{tabular}
            \caption{Prisoner's Dilemma Payoffs}
        \end{minipage}
        \begin{minipage}{.5\linewidth}
            \centering
            \begin{tabular}{cc|c|c|}
                \multicolumn{2}{c}{} & \multicolumn{2}{c}{agent 2} \\
                \cline{3-4}
                & & $C$ & $D$\\ \cline{2-4}
                \multicolumn{1}{c}{\multirow{2}{*}{agent 1} } &
                \multicolumn{1}{|c|}{$C$} & $(y,y)$ & $(y,z)$\\ \cline{2-4}
                \multicolumn{1}{c}{}                        &
                \multicolumn{1}{ |c| }{$D$} & $(z,y)$ & $(z,z)$\\ \cline{2-4}
            \end{tabular}
            \caption{Social Game Payoffs}
        \end{minipage} 
    \end{table}
    
    and such that $$s = z, \quad t > r > y > p > s, \quad r > (t + s)/2, \quad  y > (r + p)/2$$

    There are 2 games, so we choose to denote the vector types of both agents with $\begin{bmatrix}1 - \theta_1 \\ \theta_1\end{bmatrix}$ and $\begin{bmatrix}1 - \theta_2 \\ \theta_2\end{bmatrix}$, $\theta_i$ is called the \textit{pro-social coefficient} of agent $i$.
\end{dfn}

\begin{figure}[ht]
\centering
\includegraphics[width = \hsize]{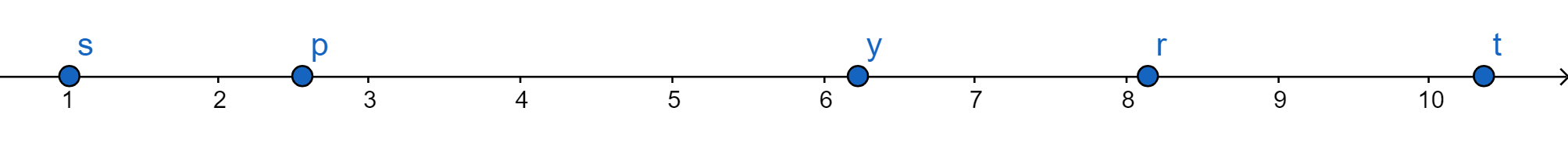}
\caption{Example of coefficients verifying the DGPD conditions}
\end{figure}

As a direct consequence of Theorem \ref{thm:practical_continuous_theorem}, we can easily deduce the existence of pure Bayesian Nash Equilibrium.

\begin{corollary}
For any DGPD with continuous type space and continuous prior, there exists a pure Bayesian Nash Equilibrium (made of threshold strategies).
\end{corollary}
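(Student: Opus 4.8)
The plan is to reduce the statement to Theorem~\ref{thm:practical_continuous_theorem} by checking that in any DGPD each agent has at least one purely competitive (equivalently, purely cooperative) local game. Concretely, I would first inspect the Prisoner's Dilemma component, local game $j=1$. For agent $i$ and any opponent action $a_{-i}$, the relevant comparison is between the $C$-row and the $D$-row of the PD payoff matrix: $u_{i1}(C,C)=r$ versus $u_{i1}(D,C)=t$, and $u_{i1}(C,D)=s$ versus $u_{i1}(D,D)=p$. The DGPD conditions give $t>r$ and $p>s$ (both implied by the chain $t>r>y>p>s$), hence $u_{i1}(C,a_{-i})<u_{i1}(D,a_{-i})$ for every $a_{-i}$, so local game $1$ is purely competitive for agent $i$; by the symmetry of the payoff matrices this holds for both agents. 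Alternatively one could use the social game $j=2$, where $y>p>s=z$ gives $u_{i2}(C,a_{-i})=y>z=u_{i2}(D,a_{-i})$, making $C$ strictly dominant and game $2$ purely cooperative.

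With this observation in hand, the hypotheses of Theorem~\ref{thm:practical_continuous_theorem} are met: a DGPD is a uniform multigame $G_{2,2,2}$, which after the normalization $w_i=1$ has two-dimensional type vectors $[1-\theta_i,\theta_i]^{\top}$ and therefore sits inside the $G_{m,n,2}$ family the theorem addresses; its type space is continuous with a continuous prior by assumption; and every agent has a purely competitive local game. Theorem~\ref{thm:practical_continuous_theorem} then delivers a pure Bayesian Nash Equilibrium. To obtain the parenthetical strengthening, I would invoke the remark following Definition~\ref{TS}: every best response is a threshold strategy, so every Bayesian Nash Equilibrium, in particular the pure one just produced, consists entirely of (pure) threshold strategies.

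Since the argument is a direct specialization of an already-established theorem, I do not expect any genuine obstacle; the only care needed is bookkeeping, namely verifying that the DGPD's defining inequalities force strict dominance in at least one component for each of the two agents, and confirming that the normalization and the explicit two-dimensional type vectors place the DGPD squarely in the class to which Proposition~\ref{prop:purely_competitive} and Theorem~\ref{thm:practical_continuous_theorem} apply.
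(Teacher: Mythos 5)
Your proposal is correct and follows essentially the same route as the paper: verify that a DGPD satisfies the hypothesis of Theorem~\ref{thm:practical_continuous_theorem} via strict dominance in one local game and then apply that theorem (the paper cites the social game as purely cooperative, while you primarily check the PD component as purely competitive, which is an equally valid instantiation of the same condition). The parenthetical about threshold strategies is likewise justified exactly as you say, by the remark after Definition~\ref{TS} that best responses are always threshold strategies.
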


\begin{proof}
    This is a direct consequence of Theorem \ref{thm:practical_continuous_theorem} as the social game is a purely cooperative game for both agents.
\end{proof}

Replacing $u_{ij}(a_i, a_{-i})$ and $\vec\theta_i$ in the previous computations by DGPD parameters gives us: 

\begin{align}
    \overline{U_i}(C, \theta_i, \sigma_{-i}) &= (1 - \theta_i)\left[ \zeta_i^C(\sigma_{-i}) r + \zeta_i^D(\sigma_{-i}) s\right] + \theta_i y\\
    \overline{U_i}(D, \theta_i, \sigma_{-i}) &= (1 - \theta_i)\left[ \zeta_i^C(\sigma_{-i}) t + \zeta_i^D(\sigma_{-i}) p\right] + \theta_i s
\end{align}

Note that for a given $\sigma_{-i}$, the expected values $\overline{U_i}(C, \theta_i, \sigma_{-i})$ and $\overline{U_i}(D, \vec\theta_i, \sigma_{-i})$ are linear functions in $\theta_i$ and they cross at $\theta_i = \theta_i^* \in [0,1]$ since we have: $$y \geq s, \quad \zeta_i^C(\sigma_{-i}) r + \zeta_i^D(\sigma_{-i}) s \leq \zeta_i^C(\sigma_{-i}) t + \zeta_i^D(\sigma_{-i}) p$$ 

As a result of this, we can formulate a more convenient definition for threshold strategy in the DGPD context.

\begin{dfn}[DGPD Threshold strategy]
    A threshold strategy $\sigma_i$ with threshold $\theta_i^*$ for agent $i$ is a strategy such that:
    \begin{align}
        \sigma_i(\theta_i) = 
        \begin{cases}
            D & \text{$\theta_i < \theta_i^*$}\\
            C & \text{$\theta_i > \theta_i^*$}\\
            \alpha_i C + (1 - \alpha_i) D & \text{$\theta_i = \theta_i^*$}
        \end{cases}
    \end{align}
\end{dfn}

Again, by construction, a best response must be a threshold strategy as we have just rearranged the notation compared to the general definition. Note that when the pro-social coefficient is low (i.e., below the threshold) agent $i$ defects, and when the pro-social coefficient is high (i.e., above the threshold) agent $i$ cooperates. Figure \ref{fig:conclusion_1} summarizes the concept of threshold strategy in the context of DGPD.\\

\begin{figure}[ht]
\centering
\includegraphics[width = 0.95\hsize]{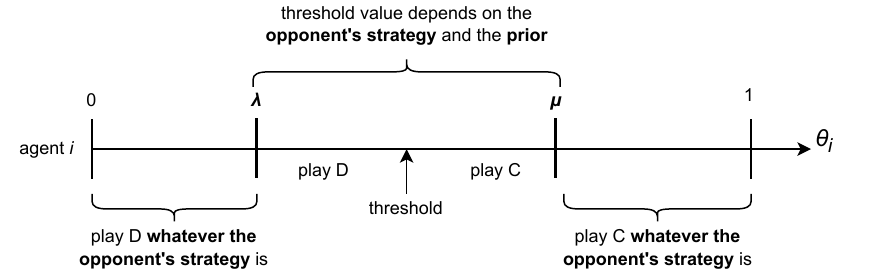}
\caption{DGPD strategy summarized ($\lambda < \mu$).}
\label{fig:conclusion_1}
\end{figure}

We define the {\em threshold function} $\theta_i^*:[0,1] \rightarrow \R$ by:

\begin{align}\label{eqn:threshold_func}
   \theta_i^*(x) &= \frac{x(t - r) + (1 - x)(p - s)}{x(t - r) + (1 - x)(p - s) + (y - s)} 
\end{align}

By a simple calculation, we conclude that, given the other agent strategy $\sigma_{-i}$, agent $i$'s best response is a threshold strategy with threshold $\theta_i^* = \theta_i^*(\zeta_i^C(\sigma_{-i}))$.

We now define $\lambda$ and $\mu$ that are combinations of DGPD payoff parameters as in~\citep{multigames}:
\begin{align}
    \mu := \theta_i^*(0) = \frac{p-s}{p-s+y-s}, \quad \quad
    \lambda := \theta_i^*(1) = \frac{t-r}{t-r+y-s}
\end{align}

\begin{prop}[$\theta_i^*$ monotonicity] \label{prop:threshold_monotonicity}
    The threshold function $\theta_i^*:[0,1] \rightarrow \R$ is monotonic and is increasing if $\mu<\lambda$, decreasing if $\mu>\lambda$ and is constant for $\mu=\lambda$. 
 \end{prop}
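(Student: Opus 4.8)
The plan is to rewrite $\theta_i^*$ as the composition of an affine map with a fixed fractional–linear function, so that its monotonicity type can be read off directly from the sign of a single slope. First I would set $g(x) := (1-x)(p-s) + x(t-r)$, so that the numerator in the defining formula \eqref{eqn:threshold_func} is exactly $g(x)$ and the denominator is $g(x) + (y-s)$; writing $c := y-s$, this gives $\theta_i^*(x) = g(x)/(g(x)+c)$. The DGPD inequalities $t>r$, $p>s$, $y>s$ give $t-r>0$, $p-s>0$, $c>0$, and since for $x\in[0,1]$ the value $g(x)$ is a convex combination of the two positive numbers $p-s$ and $t-r$, we have $g(x)>0$ on all of $[0,1]$; hence $\theta_i^*$ is well defined there and no singularity occurs.

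Next I would analyse the auxiliary function $h(u) := u/(u+c)$ on $(0,\infty)$. Writing $h(u) = 1 - c/(u+c)$ shows at once that $h$ is strictly increasing on $(0,\infty)$ (equivalently $h'(u) = c/(u+c)^2 > 0$). Since $\theta_i^* = h\circ g$ with $g$ affine of slope $(t-r)-(p-s)$, the function $\theta_i^*$ inherits the monotonicity behaviour of $g$: it is strictly increasing when $(t-r)-(p-s)>0$, strictly decreasing when $(t-r)-(p-s)<0$, and constant when $(t-r)-(p-s)=0$. This already establishes that $\theta_i^*$ is monotonic in each case.

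Finally I would translate the sign of $(t-r)-(p-s)$ into the comparison of $\mu$ and $\lambda$. By definition $\mu = \theta_i^*(0) = h(g(0)) = h(p-s)$ and $\lambda = \theta_i^*(1) = h(g(1)) = h(t-r)$, with $p-s, t-r \in (0,\infty)$; since $h$ is strictly increasing on $(0,\infty)$, we get $\mu<\lambda \iff p-s < t-r \iff (t-r)-(p-s)>0$, and likewise $\mu>\lambda \iff (t-r)-(p-s)<0$ and $\mu=\lambda \iff (t-r)-(p-s)=0$. Combining this with the previous paragraph yields precisely the claimed trichotomy. I do not expect a real obstacle here; the only point needing care is the verification that $g(x)>0$ on the whole interval $[0,1]$, which makes the fractional–linear step legitimate and preserves strictness, and which follows from the DGPD ordering $t>r>y>p>s$.
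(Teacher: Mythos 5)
Your proof is correct and follows essentially the same route as the paper: the paper simply declares the monotonicity of the homographic threshold function ``straightforward'' and then identifies $\mu=\theta_i^*(0)$, $\lambda=\theta_i^*(1)$, which is exactly what you do, just with the details (the decomposition $\theta_i^* = h\circ g$ with $h(u)=u/(u+c)$ increasing and $g$ affine of slope $(t-r)-(p-s)$, plus positivity of $g$ from $t>r$, $p>s$, $y>s$) spelled out. No gap to report.
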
   
\begin{proof}
   The monotonicity is straightforward. Then we just note that $\theta_i^*(0) = \mu$ and $\theta_i^*(1) = \lambda$.
\end{proof}

\begin{prop}
    For any DGPD with continuous type space and continuous prior, there exists a pure Bayesian Nash Equilibrium. This equilibrium is made of threshold strategies with thresholds $\theta_i^* \in [\min (\lambda, \mu), \max (\lambda, \mu)]$. 
\end{prop}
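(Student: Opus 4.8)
The plan is to split the claim into its two halves: existence of a pure equilibrium, which is already in hand, and the localization of the thresholds, which will follow from the monotonicity of the threshold function $\theta_i^*$.

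For existence, I would first observe that in any DGPD the social game is purely cooperative for both agents: indeed $u_{i2}(C,a_{-i}) = y$ and $u_{i2}(D,a_{-i}) = z = s$ for every $a_{-i}$, and the DGPD inequalities give $y > p > s = z$, so $u_{i2}(C,a_{-i}) > u_{i2}(D,a_{-i})$ for all $a_{-i}$. Hence Theorem~\ref{thm:practical_continuous_theorem} (equivalently the Corollary above) applies and furnishes a pure Bayesian Nash Equilibrium $s=(s_1,s_2)$ made of threshold strategies. It then remains only to control where the thresholds sit.

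For the localization, I would take any Bayesian Nash Equilibrium $s=(s_1,s_2)$ — in particular the pure one just produced — and use that each $s_i$ is a best response to $s_{-i}$. By the remark following Equation~\eqref{eqn:threshold_func}, agent $i$'s best response is a threshold strategy with threshold $\theta_i^*\bigl(\zeta_i^C(s_{-i})\bigr)$, and $\zeta_i^C(s_{-i})$, being a probability, lies in $[0,1]$. By Proposition~\ref{prop:threshold_monotonicity} the map $\theta_i^*$ is continuous and monotone on $[0,1]$ with $\theta_i^*(0)=\mu$ and $\theta_i^*(1)=\lambda$, so its image over $[0,1]$ is exactly the closed interval $[\min(\lambda,\mu),\max(\lambda,\mu)]$. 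Therefore $\theta_i^* = \theta_i^*\bigl(\zeta_i^C(s_{-i})\bigr) \in [\min(\lambda,\mu),\max(\lambda,\mu)]$, which is the assertion.

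I do not expect a serious obstacle here; this proposition is essentially a corollary of the machinery already assembled. The two points I would be careful about are: (i) that $\zeta_i^C(s_{-i})$ is well defined and independent of the weight $\alpha_{-i}$ on the mixed component, which holds because the mixed component is supported on the zero-measure hyperplane $\{\vec\theta_{-i}\cdot\vec\delta_{-i}=0\}$ and the prior is atomless; and (ii) that the argument in fact pins down the thresholds of \emph{every} equilibrium, pure or mixed, since it only used that $s_i$ is a best response, so the statement's ``this equilibrium'' can be read as ``any equilibrium''.
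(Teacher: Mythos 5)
Your proposal is correct and follows essentially the same route as the paper: existence comes from the earlier corollary (the social game is purely cooperative, so Theorem~\ref{thm:practical_continuous_theorem} applies), and the localization $\theta_i^* = \theta_i^*\bigl(\zeta_i^C(\sigma_{-i})\bigr) \in [\min(\lambda,\mu),\max(\lambda,\mu)]$ is exactly the paper's argument, using $\zeta_i^C(\sigma_{-i})\in[0,1]$ together with the monotonicity of $\theta_i^*$ and the endpoint values $\theta_i^*(0)=\mu$, $\theta_i^*(1)=\lambda$. Your extra remarks (well-definedness of $\zeta_i^C$ under the atomless prior, and that the bound applies to any equilibrium) are sound refinements of the same argument, not a different method.
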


\begin{proof}
    The predicate $\theta_i^* =  \theta_i^*(\zeta_i^C(\sigma_{-i})) \in [\min (\lambda, \mu), \max (\lambda, \mu)]$ results from the fact that $\zeta_i^C(\sigma_{-i}) \in [0, 1]$ and $\theta_i^*(\zeta_i^C(x))$ is monotonic in $[0, 1]$.
\end{proof}

\subsection{Example for the SADP}

Consider a simple situation with two companies, each having two different visions and two possible actions. Assume, for example, that the first vision is to protect the brand reputation and the second is to gain market shares. Suppose that there's an opportunity to establish a new production facility in a controversial area. Seizing the opportunity will for sure help gain market shares but will also impact negatively on the reputation. Each company can either leave this opportunity (C) or compete to set up a new site in the area (D). The following tables summarize this situation:

\begin{table}[ht]
    \begin{minipage}{.5\linewidth}
        \centering
        \begin{tabular}{cc|c|c|}
            \multicolumn{2}{c}{} & \multicolumn{2}{c}{company B} \\
            \cline{3-4}
            & & $C$ & $D$\\ \cline{2-4}
            \multicolumn{1}{c}{\multirow{2}{*}{company A} } &
            \multicolumn{1}{|c|}{$C$} & $(5,5)$ & $(0,9)$\\ \cline{2-4}
            \multicolumn{1}{c}{}                        &
            \multicolumn{1}{ |c| }{$D$} & $(9,0)$ & $(2,2)$\\ \cline{2-4}
        \end{tabular}
        \caption{Market share Game}
    \end{minipage}
    \begin{minipage}{.5\linewidth}
        \centering
        \begin{tabular}{cc|c|c|}
            \multicolumn{2}{c}{} & \multicolumn{2}{c}{company B} \\
            \cline{3-4}
            & & $C$ & $D$\\ \cline{2-4}
            \multicolumn{1}{c}{\multirow{2}{*}{company A} } &
            \multicolumn{1}{|c|}{$C$} & $(4,4)$ & $(4,0)$\\ \cline{2-4}
            \multicolumn{1}{c}{}                        &
            \multicolumn{1}{ |c| }{$D$} & $(0,4)$ & $(0,0)$\\ \cline{2-4}
        \end{tabular}
        \caption{Reputation Game}
    \end{minipage} 
\end{table}

Both companies have a continuous type space and prior. According to our last result, there exists a pure Bayesian Nash Equilibrium made of threshold strategies. By computing $\lambda = 1/2$ and $\mu = 1/3$, we also know that the thresholds $\theta_i^*$ are in $[\frac{1}{3},\frac{1}{2}]$. If we add the assumption that priors are uniform, we can show (see Section~\ref{algRes_cont} on Algorithmic results) that $$\theta_1^* = \theta_2^* = \frac{5 - \sqrt{17}}{2} \approx 0.4384$$ 

\subsection{Application to Other 2-Players Games}

In this section we present other examples of multigames for which we can apply our result on the existence of pure NE.

\subsubsection{A Chicken Game variation}
In the Chicken Game (also called Hawk-Dove Game) \citep{osborne} there are two agents that can either go for the conflict (Conflict) or avoid it (Avoid). The best outcome for an agent is to play Conflict while the other plays Avoid. As opposed to the PD, if they both play Conflict, they both face the worst outcome. An example is given in Table~\ref{tab:chicken_payoff}.

\begin{table}[h!]
    \centering    
	\begin{tabular}{|c|c|c|}
		\cline{2-3} 
		\multicolumn{1}{c|}{} & Avoid & Conflict  \\ \hline
		Avoid & $(2, 2) $&$(1, 3)$ \\\hline
		Conflict &$(3, 1) $&$(0, 0)$\\\hline
	\end{tabular}
	\caption{The Chicken Game payoff matrix}
	\label{tab:chicken_payoff}
\end{table}

Because it is mainly a toy example like the PD, this game can only model very specific situations. We suggest that the game can be understood as a combination of two drivers: ego and survival. Under the ego consideration, we want to play Conflict not to be considered the ``chicken". What's important is to show that we dominate our opponent and have a stronger mind. Under the survival consideration we mainly want to avoid the situation where both agents play Conflict. An example of both games is given in Table~\ref{double_chicken_payoff_ego} and \ref{double_chicken_payoff_survival}. The double game made of those two games with continuous priors has a pure Nash Equilibrium because the ego game has a strictly dominant strategy (Conflict).

\begin{table}[ht]
    \begin{minipage}{.5\linewidth}
        \centering
        \begin{tabular}{cc|c|c|}
            \multicolumn{2}{c}{} & \multicolumn{2}{c}{agent 2} \\
            \cline{3-4}
            & & Avoid & Conflict\\ \cline{2-4}
            \multicolumn{1}{c}{\multirow{2}{*}{agent 1} } &
            \multicolumn{1}{|c|}{Avoid} & $(0,0)$ & $(-1,1)$\\ \cline{2-4}
            \multicolumn{1}{c}{}                        &
            \multicolumn{1}{ |c| }{Conflict} & $(1,-1)$ & $(0,0)$\\ \cline{2-4}
        \end{tabular}
        \caption{Ego Game}
        \label{double_chicken_payoff_ego}
    \end{minipage}
    \begin{minipage}{.5\linewidth}
        \centering
        \begin{tabular}{cc|c|c|}
            \multicolumn{2}{c}{} & \multicolumn{2}{c}{agent 2} \\
            \cline{3-4}
            & & Avoid & Conflict\\ \cline{2-4}
            \multicolumn{1}{c}{\multirow{2}{*}{agent 1} } &
            \multicolumn{1}{|c|}{Avoid} & $(0,0)$ & $(0,0)$\\ \cline{2-4}
            \multicolumn{1}{c}{}                        &
            \multicolumn{1}{ |c| }{Conflict} & $(0,0)$ & $(-2,-2)$\\ \cline{2-4}
        \end{tabular}
        \caption{Survival Game}
        \label{double_chicken_payoff_survival}
    \end{minipage} 
\end{table}

\subsubsection{A Battle of Sexes Variation}

In the Battle of Sexes (also called Bach or Stravinsky), there are two agents that want to meet in a event but have opposed tastes. The main goal of both is to spend time together but they also value the fact of going to the event they like the most.  

\begin{table}[h!]
    \centering
	\begin{tabular}{|c|c|c|}
		\cline{2-3} 
		\multicolumn{1}{c|}{} & Bach & Stravinsky  \\ \hline
		Bach & $(10, 7) $&$(2, 2)$ \\\hline
		Stravinsky &$(0, 0) $&$(7, 10)$\\\hline
	\end{tabular}
	\caption{The Bach or Stravinsky payoff matrix}
	\label{tab:battle_sexes_payoff}
\end{table}

As presented in the previous example, we can try to decompose considerations into Taste and Social. In the Taste Game (Table~\ref{double_battle_sexes_payoff_taste}) both agents are motivated to follow their taste no matter what the other does. In the Social Game (Table~\ref{double_battle_sexes_payoff_social}), the agents want to be at the same event, no matter which event it is.

\begin{table}[ht]
    \begin{minipage}{.5\linewidth}
        \centering
        \begin{tabular}{cc|c|c|}
            \multicolumn{2}{c}{} & \multicolumn{2}{c}{agent 2} \\
            \cline{3-4}
            & & B & S\\ \cline{2-4}
            \multicolumn{1}{c}{\multirow{2}{*}{agent 1} } &
            \multicolumn{1}{|c|}{B} & $(1,0)$ & $(1,1)$\\ \cline{2-4}
            \multicolumn{1}{c}{}                        &
            \multicolumn{1}{ |c| }{S} & $(0,0)$ & $(0,1)$\\ \cline{2-4}
        \end{tabular}
        \caption{Taste Game}
        \label{double_battle_sexes_payoff_taste}
    \end{minipage}
    \begin{minipage}{.5\linewidth}
        \centering
        \begin{tabular}{cc|c|c|}
            \multicolumn{2}{c}{} & \multicolumn{2}{c}{agent 2} \\
            \cline{3-4}
            & & B & S\\ \cline{2-4}
            \multicolumn{1}{c}{\multirow{2}{*}{agent 1} } &
            \multicolumn{1}{|c|}{B} & $(2,2)$ & $(0,0)$\\ \cline{2-4}
            \multicolumn{1}{c}{}                        &
            \multicolumn{1}{ |c| }{S} & $(0,0)$ & $(2,2)$\\ \cline{2-4}
        \end{tabular}
        \caption{Social Game}
        \label{double_battle_sexes_payoff_social}
    \end{minipage} 
\end{table}

And like before, the double game made of those two games with continuous priors has a pure Nash Equilibrium because the Taste Game has a strictly dominant strategy (Bach for agent 1 and Stravinsky for agent 2).

\subsubsection{An Assurance Game Variation}

The Assurance Game (or Stag Hunt) is a 2-players game involving a conflict between personal safety and social cooperation. An example of payoff matrix is given in Table~\ref{tab:stag_hunt_payoff}. By essence, this game is very similar to Prisoner Dilemma. Thus we naturally combine it with a social game and get a variation of the DGPD.

\begin{table}[h!]
    \centering
	\begin{tabular}{|c|c|c|}
		\cline{2-3} 
		\multicolumn{1}{c|}{} & Stag & Hunt  \\ \hline
		Stag & $(10, 10) $&$(1, 8)$ \\\hline
		Hunt &$(8, 1) $&$(5, 5)$\\\hline
	\end{tabular}
	\caption{The Stag Hunt payoff matrix}
	\label{tab:stag_hunt_payoff}
\end{table}

\subsection{Algorithmic results}\label{algRes_cont}
\subsubsection{Uniform prior}

Recall the definition of $\zeta_i^C(\sigma_{-i})$ and $\theta_i^*(\zeta_i^C)$ that characterize the best response of an agent given its opponent's strategy (assuming that the both follow a threshold strategy): 

\begin{align*}
    \zeta_i^C(\sigma_{-i}) = p_{-i}(\theta_{-i} \in [\max (\lambda, \mu), 1]) + \int_{\min (\lambda, \mu)}^{\max (\lambda, \mu)} f_{-i}(\theta_{-i})\sigma_{-i}(\theta_{-i}, C)d\theta_{-i}
\end{align*}

\begin{align*}
    \theta_i^*(x) = \frac{x(t - r) + (1 - x)(p - s)}{x(t - r) + (1 - x)(p - s) + (y - s)}
\end{align*}

In the case of a uniform prior, $p_{-i}(\theta_{-i} \in [\max (\lambda, \mu), 1]) = 1 - \max (\lambda, \mu)$. For simplicity, we use the notation $\alpha = \min (\lambda, \mu)$ and $\beta = \max (\lambda, \mu)$. Given that the opponent plays a threshold strategy with the threshold $\theta_{-i}^*$, we have:

\begin{align*}
    \zeta_i^C(\sigma_{-i}) &= 1 - \beta + \int_{\theta_{-i}^*}^{\beta}f_{-i}(\theta_{-i})d\theta_{-i} \\
    &= 1 - \beta + (\beta - \theta_i^*) \\
    &= 1 - \theta_i^*
\end{align*}

Therefore, we can rewrite the threshold function $\theta_i^*$ as a function of $\theta_{-i}^*$:

\begin{align*}
    \theta_i^*(\theta_{-i}^*) &= \frac{(1 - \theta_{-i}^*)(t - r + s - p) + (p - s)}{(1 - \theta_{-i}^*)(t - r + s - p) + (p - s + y - s)} \\
    &= \frac{d(1 - \theta_{-i}^*) + e}{d(1 - \theta_{-i}^*) + e + f},
\end{align*}
where $$d := t - r + s - p, \quad e := p - s, \quad f := y - s$$ \newline

First, assume that the solution is symmetric for both players, meaning that $\theta_1^* = \theta_2^*$:
\begin{align*}
    \theta_i^* = \frac{d(1 - \theta_i^*) + e}{d(1 - \theta_i^*) + e + f}
\end{align*}
which is reduced to:
\begin{align}
    -d(\theta_i^*)^2 + (2d + e + f)\theta_i^* - (d + e) = 0
\end{align}
This is a quadratic equation. To evaluate the number of solutions we compute the discriminant:

\begin{align}\label{quadratic_equ}
    \Delta &= (2d + e + f)^2 - 4d(d + e)\\
    &= (e + f)^2 + 4df
\end{align}

If we search for non-symmetric solutions, the condition $\theta_i^* = \theta_{-i}^*$ doesn't hold anymore. By doing the same kind of computation we end up with:

\begin{align}\label{quadradtic_equ_2}
    -d(e + f)(\theta_i^*)^2 + ((d + e + f)^2 - d^2)\theta_i^* - (e(d+e+f) + df) = 0
\end{align}

This is also a quadratic equation and we notice that it is the same quadratic as the symmetric case but is multiplied by the constant $(e + f) \neq 0$. Hence, both equations have exactly the same solutions.

\begin{prop}
    Under the DGPD assumptions, the quadratic equations~(\ref{quadratic_equ}) and (\ref{quadradtic_equ_2}) always have 2 solutions.
\end{prop}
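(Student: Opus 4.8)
The plan is to reduce the claim to showing that the common discriminant of the two quadratics is strictly positive. First I would record the signs forced by the DGPD conditions $s = z$ and $t > r > y > p > s$: since $y > p > s$ we get $f = y - s > 0$, and since $p > s$ we get $e = p - s > 0$; by contrast the sign of $d = (t - r) - (p - s)$ is \emph{not} determined by the constraints, and this is where the real work lies. Next I would invoke the observation already made in the text that equation~(\ref{quadradtic_equ_2}) is equation~(\ref{quadratic_equ}) multiplied through by the nonzero constant $e + f$, so the two equations have identical roots and it suffices to analyse (\ref{quadratic_equ}); its discriminant has been computed as $\Delta = (e+f)^2 + 4df$, and a quadratic with nonzero leading coefficient has two distinct real roots exactly when $\Delta > 0$.

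The heart of the argument is then a short case split on the sign of $d$. If $d \geq 0$, then $\Delta = (e+f)^2 + 4df \geq (e+f)^2 > 0$ since $e + f > 0$. If $d < 0$, I would use the DGPD inequality $t > r$: then $|d| = (p-s) - (t-r) = e - (t-r) < e$, hence $4f|d| < 4fe$ (using $f > 0$), and therefore
\[
\Delta = (e+f)^2 - 4f|d| > (e+f)^2 - 4fe = (e-f)^2 \geq 0 .
\]
In both cases $\Delta > 0$, so (\ref{quadratic_equ}), and with it (\ref{quadradtic_equ_2}), has two distinct real solutions, which is the assertion.

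I expect the only genuine subtlety to be the case $d < 0$, where positivity of $\Delta$ must be extracted from the ordering $t > r$ together with the elementary bound $(e+f)^2 \geq 4ef$; the case $d \geq 0$ is immediate. One edge case deserves a remark: if $d = 0$ then (\ref{quadratic_equ}) degenerates to the linear equation with the single root $\theta_i^* = e/(e+f)$, so for the statement to hold literally one should either exclude this boundary configuration (which is compatible with the DGPD constraints) or read the "second solution" as the root at infinity. Apart from that caveat, the proof is a pure sign computation driven by the DGPD inequalities.
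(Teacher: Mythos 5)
Your proof is correct and takes essentially the same route as the paper: both arguments come down to showing the common discriminant is strictly positive, the paper doing it in one line by rewriting $\Delta = (e-f)^2 + 4(d+e)f$ and using $d+e = t-r > 0$ together with $f > 0$, while your case split on the sign of $d$ invokes exactly the same fact in the only nontrivial case (your bound $|d| < e$ for $d<0$ is just $d+e>0$ again). Your caveat about $d=0$ is a fair observation the paper glosses over --- its explicit root formulas divide by $-2d$, silently assuming $d \neq 0$, which the DGPD constraints do not by themselves guarantee --- so noting that the equation degenerates to a linear one with a single root in that boundary configuration is a legitimate (minor) refinement rather than a flaw in your argument.
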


\begin{proof}
    The discriminant of those quadratics can be written as $\Delta = (e - f)^2 + 4(d + e)f$. By the constraints on the DGPD parameters, $f > 0$ and $d+e = t - r > 0$, and thus we have $\Delta > 0$ and the quadratics have two solutions $x_- = \frac{-(2d + e + f) - \sqrt{\Delta}}{-2d}$ and $x_+ = \frac{-(2d + e + f) + \sqrt{\Delta}}{-2d}$.
\end{proof}

Now, we need to check the validity of the solutions, i.e. $\alpha \leq x_{sol} \leq \beta$.

\begin{prop}
    Among the two solutions of the quadratics (\ref{quadratic_equ}) and (\ref{quadradtic_equ_2}), $x_{+}$ is always valid and $x_{-}$ is always invalid.
\end{prop}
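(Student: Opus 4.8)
The claim is that of the two roots $x_\pm = \frac{-(2d+e+f)\pm\sqrt\Delta}{-2d}$ of the quadratic $-d x^2 + (2d+e+f)x - (d+e) = 0$, exactly $x_+$ lies in the admissible interval $[\alpha,\beta] = [\min(\lambda,\mu),\max(\lambda,\mu)]$ and $x_-$ does not. The cleanest route is to exploit the fixed-point structure directly rather than manipulating the radical. Recall $\theta_i^*(\theta_{-i}^*) = \frac{d(1-\theta_{-i}^*)+e}{d(1-\theta_{-i}^*)+e+f}$; composing this map with itself (the best-response of $i$ to the best-response of $-i$) gives a self-map $T$ of $[0,1]$, and the roots of the quadratic are precisely the fixed points of $T$ — equivalently, the fixed points of $g(x) := \theta_i^*(1-\theta_i^*(1-x))$... but in the symmetric reduction it is simpler: the quadratic is exactly the fixed-point equation $x = h(x)$ where $h(x) = \frac{d(1-x)+e}{d(1-x)+e+f}$.

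**Key steps.** First I would check the endpoint values $h(\alpha)$ and $h(\beta)$ and the sign of $x - h(x)$ there; equivalently, evaluate the quadratic $q(x) := -dx^2+(2d+e+f)x-(d+e)$ at $x = \lambda = \frac{t-r}{t-r+y-s}$ and $x = \mu = \frac{p-s}{p-s+y-s}$. The DGPD constraints ($t>r>y>p>s$, $s=z$, and the two averaging inequalities) should force $q(\lambda)$ and $q(\mu)$ to have a definite sign pattern; I expect both to be positive (or both of the sign that places exactly one root inside $[\alpha,\beta]$), while $q$ at $0$ and $1$ is negative since $q(0) = -(d+e) = -(t-r) < 0$ and the leading coefficient $-d$ has whatever sign $d = t-r+s-p$ carries. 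Because $h$ maps $[0,1]$ monotonically into $[0,1]$ (this is essentially Proposition on $\theta_i^*$-monotonicity together with $0\le h\le 1$), its graph crosses the diagonal, and the earlier proposition already tells us $h([0,1])\subseteq[\alpha,\beta]$, so \emph{any} fixed point automatically lies in $[\alpha,\beta]$ — wait, that would make both valid, so the real content must be that only one of $x_\pm$ is an actual fixed point of $h$ (the other being a spurious root introduced by clearing denominators) \emph{or} that only one lies in $[0,1]$ at all. I would therefore carefully track which root solves $x(d(1-x)+e+f) = d(1-x)+e$ versus which is extraneous, checking the sign of the denominator $d(1-x)+e+f$ at each root.

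**The main obstacle.** The crux is handling the sign of $d = t-r+s-p$, which the DGPD axioms do \emph{not} pin down: $t-r>0$ and $p-s>0$ but $s-p<0$, so $d$ may be positive or negative, and the quadratic's concavity flips accordingly. The argument must be organized so it works in both regimes — likely by writing $x_\pm$ in the denominator-cleared form $x_+ = 1 - \frac{2(d+e)}{2d+e+f+\sqrt\Delta}$ (rationalizing to avoid division by $d$) and bounding this against $\alpha$ and $\beta$ using only $d+e = t-r>0$, $e = p-s>0$, $f = y-s>0$, and $e+f = p-s+y-s$ compared with... the constraint $y > (r+p)/2$. I would push the inequality $x_+ \ge \alpha$ and $x_+ \le \beta$ down to polynomial inequalities in $t,r,y,p,s$ and verify them from the DGPD relations, then show $x_- < \alpha$ (or $x_-\notin[0,1]$) by the same reduction; the product-of-roots identity $x_+ x_- = \frac{d+e}{d}$ and sum $x_++x_- = \frac{2d+e+f}{d}$ give shortcuts once the sign of $d$ is fixed in each case. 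The bookkeeping across the $d>0$ / $d<0$ split is where the proof will be most delicate, but no single step should require more than elementary algebra.
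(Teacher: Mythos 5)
Your proposal assembles the right ingredients (the fixed-point reading of the quadratic, the role of $\mathrm{sign}(d)$, root sum/product identities), but as written it is a programme with the decisive verifications deferred, and two of its pivotal assertions are wrong or left dangling. First, the endpoint signs are misstated: $q(0)=-(d+e)=-(t-r)<0$ is right, but $q(1)=f=y-s>0$, not negative. This is not cosmetic — the sign change of $q$ on $[0,1]$ is exactly what closes the argument along your own route: it forces exactly one root of the quadratic into $(0,1)$, and combined with your correct observation that $h(x)=\frac{d(1-x)+e}{d(1-x)+e+f}$ maps $[0,1]$ into $[\min(\lambda,\mu),\max(\lambda,\mu)]$ (Proposition~\ref{prop:threshold_monotonicity}; the pole $1+(e+f)/d$ is $>1$ when $d>0$ and $<0$ when $d<0$ since then $|d|=(p-s)-(t-r)<e+f$), the interior root is automatically valid and the exterior root automatically invalid. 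Second, the ``spurious root from clearing denominators'' fork that you raise is never resolved, yet it is easy to kill: if $d(1-x)+e+f=0$ at a root of $q$, then $q(x)=-(d(1-x)+e)=f\neq 0$, a contradiction; so both roots are genuine fixed points of $h$, and the correct branch of your dilemma is that exactly one root lies in $[0,1]$.

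Third, and most importantly, the actual content of the proposition — that the admissible root is specifically $x_+$ and the inadmissible one is $x_-$ for either sign of $d$ — is precisely what you postpone (``I would push the inequality \dots and verify''), and it is the whole substance of the paper's (also terse) proof, which compares $x_\pm$ with $\mu$ and $\lambda$ after splitting on $\mathrm{sign}(d)$. To finish along your lines: for $d>0$ the sum $\frac{2d+e+f}{d}$ and product $\frac{d+e}{d}$ of the roots are positive, the parabola opens downward, and $q(1)=f>0$ places $1$ strictly between the roots, so the smaller root $x_+=\frac{(2d+e+f)-\sqrt{\Delta}}{2d}$ is the one in $(0,1)$ while $x_->1\geq\beta$; for $d<0$ the product $\frac{d+e}{d}<0$ and $\Delta-(2d+e+f)^2=-4d(d+e)>0$ gives $\sqrt{\Delta}>|2d+e+f|$, hence $x_+>0>x_-$, so $x_+$ is the interior root and $x_-<0\leq\alpha$. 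Note in passing that your fallback ``show $x_-<\alpha$'' is the wrong side in the case $d>0$ (only the parenthetical $x_-\notin[0,1]$ is uniformly correct), and your rationalized formula should read $x_+=\frac{2(d+e)}{2d+e+f+\sqrt{\Delta}}$, without the leading ``$1-$''; bounding the expression you wrote would derail the verification you propose. Until this case analysis (or the paper's direct computation of $x_\pm-\mu$ and $x_\pm-\lambda$) is actually carried out, the claim is not proved.
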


\begin{proof}
    By computing $x_{-} - \mu$ and $x_{-} - \lambda$ we can notice that depending on the sign of $d$, we have either $x_{-} > \max(\mu, \lambda)$ or $x_{-} < \min(\mu, \lambda)$. In both case the solution is not valid. With the same kind of reasoning, we can show that depending on the sign of $d$, we have either $\mu < x_{+} < \lambda$ or $\lambda <x_{+} < \mu$. Thus $x_{+}$ is always valid.
\end{proof}

We end up with the fact that such games have always exactly one pure NE which is symmetric (i.e. both players have the same threshold) where $\theta_1^* = \theta_2^* = x_{+}$.

\subsubsection{General solution}

We denote by $F_i(x)$ the cumulative distribution function of agent $i$'s prior $f_{-i}(\theta_{-i})$. In the case of uniform prior, we were able to explicitly find $F_i(\cdot)$ and deduce an algebraic equation for the solutions. To find a solution for any prior, we have to solve the follwing set of equations:
\begin{align}
    \theta_1^* &= \frac{A_1 - dF_1(\theta_2^*)}{B_1 - dF_1(\theta_2^*)} \\
    \theta_2^* &= \frac{A_2 - dF_2(\theta_1^*)}{B_2 - dF_2(\theta_1^*)}
\end{align}
with
\begin{align*}
    \zeta_i^C(\sigma_{-i}) &= F_i(1) - F_i(\theta_{-i}^*) \\
    A_i &= dF_i(1) + e\\
    B_i &= A_i + f
\end{align*}

Therefore, for continuous type space, the NE search is equivalent to solving a nonlinear multivariate equation:

\begin{align}
    f(\theta_1^*, \theta_2^*) = \left(\theta_1^* - \frac{A_1 - dF_1(\theta_2^*)}{B_1 - dF_1(\theta_2^*)}, \theta_2^* - \frac{A_2 - dF_2(\theta_1^*)}{B_2 - dF_2(\theta_1^*)}\right)\\
    \text{find $(\theta_1^*, \theta_2^*)$ such that $f(\theta_1^*, \theta_2^*) = 0$}
\end{align}

\section{Uniform multigames with discrete type space}
\subsection{General Remarks}

The type space is now assumed to be discrete. The term $\overline{U}_i(a_i, \vec\theta_i, \sigma_{-i})$, for $i\in I$, is the same as the continuous prior except that the integral is now replaced with a sum. Thus we keep the same notations and consider: 
\begin{align}
    \zeta_i^{a_{-i}}(\sigma_{-i}) := \sum_{\vec\theta_{-i} \in \Theta_i}p_i(\vec\theta_{-i})\sigma_{-i}(\vec\theta_{-i}, a_{-i})\,d\vec\theta_{-i}
\end{align}

\begin{align*}
    \overline{U}_i(a_i, \vec\theta_i, \sigma_{-i}) &= \sum_{\vec\theta_{-i} \in \Theta_{-i}}p_i(\vec\theta_{-i})U_i(a_i, \sigma_{-i}(\vec\theta_{-i}), \vec\theta_i) \\
    &= \vec\theta_i \cdot (\overline{u}_{ij}(a_i, \sigma_{-i}))_{j \in J}
\end{align*}

When there are only two actions, the notion of threshold strategy remains exactly the same, but, since the type space is discrete, two strategies with different thresholds can evaluate to the same value for types $\vec\theta_i \in \Theta_i$ and actions $a_i \in A_i$.
\begin{dfn}[Equivalent strategies and thresholds]
    Two strategies $\sigma_i$ and $\sigma_i'$, for player $i\in I$, are said to be \textit{equivalent} ($\sigma_i \sim \sigma_i'$) if
    $$\forall (\vec\theta_i, a_i) \in \Theta_i \times A_i , \quad \sigma_i(\vec\theta_i, a_i) = \sigma_i'(\vec\theta_i, a_i)$$
    Given two threshold strategies $(\sigma_i, \vec\delta_i)$ and $(\sigma_i', \vec\delta_i')$, $\vec\delta_i$ and $\vec\delta_i'$ are said to be \textit{equivalent} ($\vec\delta_i \sim \vec\delta_i'$) if $\sigma_i \sim \sigma_i'$.
\end{dfn}
Notice that the binary relation $\sim$ restricted to threshold strategies is obviously an equivalence relation.

We also note that in contrast to the relation between strategies, the relation between thresholds is not an equivalence relation. Indeed, if a threshold leads to a mixed threshold strategy we cannot write $\vec\delta_i \sim \vec\delta_i$ because $\alpha_i$ of Definition~\ref{TS} is not constrained. In other words, two strategies with the same threshold $\vec\delta_i$ can be different (as long as the mixed case is reached by some $\theta_i \in \vec\Theta_i$). Note that a threshold strategy $\sigma_i$ is mixed if and only if there exists $\vec\theta_i \in \Theta_i$ such that $\vec\theta_i \cdot \vec\delta_i = 0$.\\

Given a threshold strategy $\sigma_i$, let $S_{eq}(\sigma_i): = \{\sigma_i' \mid \sigma_i' \sim \sigma_i\}$ be the set of threshold strategies equivalent to $\sigma_i$. Given a threshold $\vec\delta_i$, let $T_{eq}(\vec\delta_i) := \{ \vec\delta_i' \mid \vec\delta_i' \sim \vec\delta_i\}$ be  the set of thresholds equivalent to $\vec\delta_i$.

\begin{prop}
    Suppose a threshold $\vec\delta_i$, for $i\in I$, leads to a pure threshold strategy (i.e. $\forall \vec\theta_i \in \Theta_i$ $\vec\theta_i \cdot \vec\delta_i \neq 0$). Then, we have the following properties:
    \begin{enumerate}
        \item $T_{eq}(\vec\delta_i)$ is a non-empty set.
        \item For any (strictly) positive $\lambda \in \R^{*+}$, $\lambda \vec\delta_i \in T_{eq}(\vec\delta_i)$.
        \item $T_{eq}(\vec\delta_i)$ is a convex set.
        \item If $\Theta_i$ is finite, $T_{eq}(\vec\delta_i)$ contains vectors that are not collinear with $\vec\delta_i$.
    \end{enumerate}
\end{prop}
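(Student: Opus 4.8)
The plan is to prove the four properties in order, exploiting the fact that $\vec\delta_i$ induces a pure threshold strategy, which by hypothesis means $\vec\theta_i \cdot \vec\delta_i \neq 0$ for every $\vec\theta_i \in \Theta_i$. The first property, non-emptiness, is immediate since $\vec\delta_i \sim \vec\delta_i$: because no type lies on the hyperplane $\{\vec\theta_i \cdot \vec\delta_i = 0\}$, the associated threshold strategy is fully determined by its pure components (there is no free $\alpha_i$), so $\vec\delta_i \in T_{eq}(\vec\delta_i)$.

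For the second property, I would observe that for any $\lambda \in \R^{*+}$ and any $\vec\theta_i \in \Theta_i$, the sign of $\vec\theta_i \cdot (\lambda\vec\delta_i) = \lambda(\vec\theta_i \cdot \vec\delta_i)$ agrees with the sign of $\vec\theta_i \cdot \vec\delta_i$; hence the strategy $\sigma_i$ attached to $\lambda\vec\delta_i$ plays the same pure action on every type as the one attached to $\vec\delta_i$, so $\sigma_i \sim \sigma_i'$ and therefore $\lambda\vec\delta_i \in T_{eq}(\vec\delta_i)$. The third property follows the same template: if $\vec\delta_i' \sim \vec\delta_i$ and $\vec\delta_i'' \sim \vec\delta_i$, then for each fixed type $\vec\theta_i$, the quantities $\vec\theta_i \cdot \vec\delta_i'$ and $\vec\theta_i \cdot \vec\delta_i''$ have the same (nonzero) sign as $\vec\theta_i \cdot \vec\delta_i$, so for $t \in [0,1]$ the convex combination $\vec\theta_i \cdot (t\vec\delta_i' + (1-t)\vec\delta_i'')$ is a convex combination of two reals of that common sign, hence has that sign too and in particular is nonzero; thus $t\vec\delta_i' + (1-t)\vec\delta_i''$ yields the same pure threshold strategy and lies in $T_{eq}(\vec\delta_i)$.

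The fourth property is the one requiring the genuinely new idea and is the main obstacle. Here the finiteness of $\Theta_i$ is essential: the condition $\vec\delta_i' \sim \vec\delta_i$ is equivalent to the finite collection of strict inequalities $\mathrm{sign}(\vec\theta_i \cdot \vec\delta_i') = \mathrm{sign}(\vec\theta_i \cdot \vec\delta_i)$ ranging over $\vec\theta_i \in \Theta_i$, i.e. $T_{eq}(\vec\delta_i)$ is the solution set of finitely many strict linear inequalities and is therefore an \emph{open} subset of $\R^m$ (equivalently of the $(m-1)$-dimensional ambient space once normalization is taken into account). Since $\vec\delta_i$ itself lies in this open set, a whole open ball around $\vec\delta_i$ is contained in $T_{eq}(\vec\delta_i)$, and such a ball manifestly contains vectors not collinear with $\vec\delta_i$ (as long as $m \geq 2$, which holds for multidimensional games). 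I would phrase this carefully: for each $\vec\theta_i \in \Theta_i$ the map $\vec\delta \mapsto \vec\theta_i \cdot \vec\delta$ is continuous and nonzero at $\vec\delta_i$, so it retains its sign on a neighbourhood of $\vec\delta_i$; intersecting the finitely many such neighbourhoods gives an open neighbourhood of $\vec\delta_i$ inside $T_{eq}(\vec\delta_i)$, from which the non-collinearity claim follows by picking any point of the ball off the line $\R\vec\delta_i$. The only subtlety to flag is the degenerate case $m=1$, where every nonzero vector is collinear with $\vec\delta_i$ and the statement would fail; but this is excluded by the standing assumption that the game is multidimensional, and for DGPD-type games $m = 2$ so the argument applies.
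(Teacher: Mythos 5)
Your proof is correct, and items (1)--(3) coincide with the paper's own argument (reflexivity of $\sim$ on pure threshold strategies, invariance of signs under positive scaling, and sign preservation under convex combinations). Where you diverge is item (4): the paper gives an explicit perturbation, taking $\epsilon = \frac{1}{m+1}\min_{\vec\theta_i \in \Theta_i}\left| \vec\theta_i \cdot \vec\delta_i\right|$, $\vec\epsilon = (\epsilon,\ldots,\epsilon)$ and checking directly that $\vec\delta_i' = \vec\delta_i + \vec\epsilon$ satisfies $\left|\vec\theta_i\cdot\vec\epsilon\right| \le m\epsilon < \min_{\vec\theta_i}\left|\vec\theta_i\cdot\vec\delta_i\right|$, so all signs are preserved; whereas you argue topologically that the finitely many strict sign conditions cut out an open set containing $\vec\delta_i$, hence an open ball around $\vec\delta_i$ lies in $T_{eq}(\vec\delta_i)$, and any ball in $\R^m$ with $m\ge 2$ contains points off the line $\R\vec\delta_i$. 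Both routes are valid; the paper's is quantitative and fully constructive, while yours is shorter and in fact slightly more robust: the paper's specific choice $\vec\delta_i + \vec\epsilon$ is collinear with $\vec\delta_i$ in the degenerate case where $\vec\delta_i$ is itself a positive multiple of the all-ones vector, so strictly speaking it needs the ball (or a different perturbation direction) to conclude non-collinearity there, which your open-neighbourhood argument supplies automatically. Your explicit flagging of the $m=1$ caveat, excluded by the standing multidimensionality assumption, is also a sensible addition that the paper leaves implicit.
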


\begin{proof}
    \begin{enumerate}
        \item We have: $\vec\delta_i \in T_{eq}(\vec\delta_i)$ and thus the latter set is non-empty.
        \item Since the strategy only depends on the sign of $\vec\theta_i \cdot \vec\delta_i$, multiplying by a strictly positive $\lambda$ has no impact on the resulting action.
        \item Take $\vec \delta_i^1, \vec \delta_i^2 \in T_{eq}(\vec \delta_i)$, $\lambda \in [0, 1]$ and $\vec\delta_i^3 = \lambda \vec\delta_i^1 + (1 - \lambda)\vec\delta_i^2$, then notice that $\vec\delta_i^3 \cdot \vec\theta_i = \lambda (\vec\delta_i^1 \cdot \vec\theta_i) + (1 - \lambda)(\vec\delta_i^2 \cdot \vec\theta_i)$ has the same sign as $\vec\delta_i \cdot \vec\theta_i$, so that $\vec\delta_i^3 \in T_{eq}(\vec\delta_i)$. 
        \item Assume that $\Theta_i$ is finite and consider $\epsilon = \frac{1}{m+1}\displaystyle \min_{\vec\theta_i \in \Theta_i}\left| \vec\theta_i \cdot \vec\delta_i\right|$ and $\vec\epsilon = (\epsilon, \epsilon, ..., \epsilon)$. Let $\vec\delta_i' := \vec\delta_i + \vec\epsilon$:
        \begin{align*}
            \left| \vec\theta_i \cdot \vec\delta_i - \vec\theta_i \cdot \vec\delta_i' \right| = & \left| \vec\theta_i \cdot \vec\epsilon \right| \\
            \leq &\, m \epsilon \\
            < &  \min_{\vec\theta_i \in \Theta_i}\left| \vec\theta_i \cdot \vec\delta_i\right|
        \end{align*}
        So $\vec\theta_i \cdot \vec\delta_i$ and $\vec\theta_i \cdot \vec\delta_i'$ have the same sign for any $\vec\theta_i \in \Theta_i$ which concludes the proof.
    \end{enumerate}
\end{proof}

\subsection{The Double Game Prisoner Dilemma}

We keep the same framework for the DGPD with a discrete type space. For each $i\in I$, we have:

\begin{align}
    \overline{U}_i(a_i, \theta_i, \sigma_{-i}) &= \zeta_i^C(\sigma_{-i})U_i(a_i, C, \theta_i) + \zeta_i^D(\sigma_{-i})U_i(a_i, D, \theta_i)
\end{align}

and

\begin{align}
    \overline{U_i}(C, \theta_i, \sigma_{-i}) &= \zeta_i^C(\sigma_{-i}) \left[(1 - \theta_i) r + \theta_i y\right] + \zeta_i^D(\sigma_{-i}) \left[ (1 - \theta_i) s + \theta_i y\right]\\
    \overline{U_i}(D, \theta_i, \sigma_{-i}) &= \zeta_i^C(\sigma_{-i}) \left[(1 - \theta_i) t + \theta_i s\right] + \zeta_i^D(\sigma_{-i}) \left[ (1 - \theta_i) p + \theta_i s\right]
\end{align}

\begin{prop}
    Consider agent $i$ and let $\theta_i^x < \theta_i^{x+1}$ be two consecutive types from $\Theta_i$. All threshold strategies $\sigma_i$ with a threshold $\theta_i^*$ such that $\theta_i^x < \theta_i^* < \theta_i^{x+1}$ are equivalent.
\end{prop}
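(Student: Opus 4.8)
The plan is to argue directly from the definition of the equivalence relation $\sim$ on threshold strategies, which only requires that two strategies assign the same value $\sigma_i(\theta_i, a_i)$ for every type $\theta_i$ that \emph{actually belongs to} $\Theta_i$. Hence the only feature of the threshold $\theta_i^*$ that can matter is how it partitions the discrete set $\Theta_i$ into types lying below it, types lying above it, and types lying exactly on it.

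First I would observe that, since $\theta_i^x$ and $\theta_i^{x+1}$ are \emph{consecutive} elements of $\Theta_i$, the open interval $(\theta_i^x, \theta_i^{x+1})$ contains no point of $\Theta_i$. In particular, for any threshold $\theta_i^*$ lying strictly inside this gap there is no $\theta_i \in \Theta_i$ with $\theta_i = \theta_i^*$; the mixed component of the DGPD threshold strategy is therefore never triggered on $\Theta_i$, so $\sigma_i$ behaves as a pure strategy on $\Theta_i$ regardless of the value of $\alpha_i$.

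Next I would split $\Theta_i$ into $\Theta_i^{\le x} = \{\theta_i \in \Theta_i : \theta_i \le \theta_i^x\}$ and $\Theta_i^{\ge x+1} = \{\theta_i \in \Theta_i : \theta_i \ge \theta_i^{x+1}\}$; by the previous remark these two sets together exhaust $\Theta_i$. For $\theta_i \in \Theta_i^{\le x}$ we have $\theta_i \le \theta_i^x < \theta_i^*$, so the threshold strategy plays $D$; for $\theta_i \in \Theta_i^{\ge x+1}$ we have $\theta_i \ge \theta_i^{x+1} > \theta_i^*$, so it plays $C$. The key point is that this partition, and hence the induced map $\Theta_i \to A_i$, does not depend on where exactly $\theta_i^*$ sits inside $(\theta_i^x, \theta_i^{x+1})$: any two thresholds $\theta_i^*, \theta_i^{*\prime} \in (\theta_i^x, \theta_i^{x+1})$ produce the same action at every type of $\Theta_i$. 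Consequently the corresponding threshold strategies coincide on $\Theta_i \times A_i$, i.e. they are equivalent.

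I do not expect a genuine obstacle here; the statement is essentially a reformulation of the discreteness of $\Theta_i$. The only points that need care are invoking ``consecutive'' correctly so that the gap is provably free of types, and keeping track of the weak versus strict inequalities at the endpoints $\theta_i^x$ and $\theta_i^{x+1}$ so that the two borderline types are unambiguously assigned to $D$ and $C$ respectively.
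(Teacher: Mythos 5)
Your proposal is correct and follows essentially the same reasoning as the paper's (much terser) proof: since the gap between two consecutive types contains no element of $\Theta_i$, any threshold strictly inside it induces the same action at every type, so the strategies agree on $\Theta_i \times A_i$. Your additional care in noting that the mixed component is never triggered and in handling the endpoint inequalities simply makes explicit what the paper leaves implicit.
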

\begin{proof}
    this follows from the fact that there exists no type $\theta_i \in \Theta_i$ between $\theta_i^x$ and $\theta_i^{x+1}$, so as long as we keep the threshold between those two values we end up with the same actions for agent $i$.
\end{proof}

Note that for threshold strategies, the third case ($\theta_i = \theta_i^*$) can only be reached if $\theta_i^* \in \Theta_i$. So, if $\alpha_i \in \{0,1\}$ or $\theta_i^* \notin \Theta_i$ then the strategy is pure.\\

The search for pure Nash Equilibrium in the discrete case requires a different method since we do not have a general result for two-action multigames that can be used as in the case of continuous type space. To go further, we assume that the type space is finite. Recall, for player $i$, that $\zeta_i^{a_{-i}}(\sigma_{-i}) = \sum_{\theta_{-i} \in \Theta_{-i}}p_i(\theta_{-i})\sigma_{-i}(\theta_{-i}, a_{-i})$ is the probability that the other agent's action is $a_{-i}$ given that it follows $\sigma_{-i}$.

\begin{prop}[$\zeta_i^C$ monotonicity] \label{prop:zeta_monotonicity}
    Consider agent $i$ and suppose that the other agent follows a threshold strategy $\sigma_{-i}$ with threshold $\theta_{-i}^*$. This threshold strategy induces a value $\zeta_i^C(\sigma_{-i})$ for agent $i$. If the other agent changes its strategy $\sigma_{-i}^{\prime}$ by increasing its threshold $\theta^{* \prime}_{-i}$, then the induced value $\zeta_i^C(\sigma^{\prime}_{-i})$ decreases.
\end{prop}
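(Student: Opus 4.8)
The plan is to reduce the statement to a pointwise comparison of the opponent's cooperation probability across types, and then to sum that comparison against the non-negative prior weights. Throughout, fix agent $i$ and write $\Theta_{-i}$ for the opponent's (finite) type set. First I would record the explicit shape of a DGPD threshold strategy: if $\sigma_{-i}$ has threshold $\theta_{-i}^*$ and mixing coefficient $\alpha_{-i}\in[0,1]$, then for each $\theta_{-i}\in\Theta_{-i}$,
\[
\sigma_{-i}(\theta_{-i},C)=
\begin{cases}
0 & \theta_{-i}<\theta_{-i}^*,\\
\alpha_{-i} & \theta_{-i}=\theta_{-i}^*,\\
1 & \theta_{-i}>\theta_{-i}^*,
\end{cases}
\]
and recall that $\zeta_i^C(\sigma_{-i})=\sum_{\theta_{-i}\in\Theta_{-i}}p_i(\theta_{-i})\,\sigma_{-i}(\theta_{-i},C)$.

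Second, I would prove the key lemma: if $\sigma_{-i}$ has threshold $\theta_{-i}^*$ and $\sigma_{-i}'$ has a larger threshold $\theta_{-i}^{*\prime}>\theta_{-i}^*$ (with arbitrary mixing coefficients), then $\sigma_{-i}'(\theta_{-i},C)\le\sigma_{-i}(\theta_{-i},C)$ for every $\theta_{-i}\in\Theta_{-i}$. This is a finite case split on the position of $\theta_{-i}$ relative to the two thresholds: for $\theta_{-i}<\theta_{-i}^*$ both sides are $0$; for $\theta_{-i}>\theta_{-i}^{*\prime}$ both sides are $1$; for $\theta_{-i}$ strictly between the thresholds $\sigma_{-i}$ gives $1$ while $\sigma_{-i}'$ gives $0$; if $\theta_{-i}=\theta_{-i}^*$ then $\sigma_{-i}'(\theta_{-i},C)=0\le\alpha_{-i}=\sigma_{-i}(\theta_{-i},C)$; and if $\theta_{-i}=\theta_{-i}^{*\prime}$ then $\sigma_{-i}(\theta_{-i},C)=1\ge\alpha_{-i}'=\sigma_{-i}'(\theta_{-i},C)$. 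In every case the inequality holds.

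Third, I would conclude by monotonicity of the sum: since $p_i(\theta_{-i})\ge 0$ for all $\theta_{-i}$, summing the pointwise inequality over $\Theta_{-i}$ yields $\zeta_i^C(\sigma_{-i}')\le\zeta_i^C(\sigma_{-i})$. I would also remark that the inequality is an equality exactly when the prior puts no mass on the types at which the cooperation probability strictly drops (types strictly between $\theta_{-i}^*$ and $\theta_{-i}^{*\prime}$, together with $\theta_{-i}^*$ when $\alpha_{-i}>0$ and $\theta_{-i}^{*\prime}$ when $\alpha_{-i}'<1$), so ``decreases'' should be understood in the weak sense of ``does not increase''. There is no substantial obstacle here: the only delicate point is treating the mixed component $\alpha$ at the threshold type(s) consistently inside the case analysis and being precise that the conclusion is non-strict monotonicity.
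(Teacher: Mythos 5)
Your proposal is correct and takes essentially the same route as the paper: the paper's (very terse) proof is exactly the observation that raising the threshold weakly lowers the opponent's probability of playing $C$ at every type, so the sum $\zeta_i^C(\sigma_{-i}') = \sum_{\theta_{-i}} p_i(\theta_{-i})\sigma_{-i}'(\theta_{-i},C)$ cannot increase. You simply spell out the pointwise case analysis (including the mixed component at the threshold types) and correctly note that ``decreases'' must be read in the weak sense, which is a fair elaboration rather than a different argument.
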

\begin{proof}
    By increasing its threshold value, agent $-i$ decreases the probability of playing C and thus decreases the value $$\zeta_i^C(\sigma_{-i}') = \sum_{\theta_{-i} \in \Theta_{-i}} p_i(\theta_{-i})\sigma_{-i}'(\theta_{-i}, C)$$
\end{proof}

For integers $K,L$ with $K\leq L$, let $\llbracket K,L\rrbracket$ denote the set of integers, $K, K+1,\ldots,L$.
\begin{lemma} \label{fixedpointintervals}
Suppose $f:\llbracket0,M\rrbracket \rightarrow \llbracket0,N\rrbracket$ and $g:\llbracket0,N\rrbracket \rightarrow \llbracket0,M\rrbracket$ are both increasing (or both  decreasing). Then, there exists $c \in \llbracket0,N\rrbracket$ such that $f(g(c)) = c$.
\end{lemma}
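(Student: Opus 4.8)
The plan is to reduce this to a discrete fixed-point statement about a single monotone self-map. Consider the composition $h = g \circ f : \llbracket 0, M \rrbracket \to \llbracket 0, M \rrbracket$. If $f$ and $g$ are both increasing, then $h$ is increasing; if both are decreasing, then $h$ is increasing as well (composition of two decreasing maps is increasing). So in either case $h : \llbracket 0, M \rrbracket \to \llbracket 0, M \rrbracket$ is a monotone non-decreasing map of a finite totally ordered set to itself. I claim such a map always has a fixed point $d$ with $h(d) = d$; then setting $c = f(d)$ gives $g(c) = g(f(d)) = h(d) = d$, hence $f(g(c)) = f(d) = c$, which is exactly what we want. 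Note $c = f(d) \in \llbracket 0, N \rrbracket$ as required.

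The core step is therefore the discrete Knaster–Tarski / Kleene-style fact: an increasing map $h$ on $\llbracket 0, M \rrbracket$ has a fixed point. I would prove this directly. Look at the sign of $h(k) - k$ at the endpoints: $h(0) - 0 = h(0) \geq 0$ and $h(M) - M \leq 0$. If either is zero we are done, so assume $h(0) > 0$ and $h(M) < M$. Let $d$ be the largest element of $\llbracket 0, M \rrbracket$ with $h(d) \geq d$ (this set is non-empty since it contains $0$, and it is a finite set of integers so the maximum exists); note $d < M$ since $h(M) < M$. By maximality, $h(d+1) < d+1$, i.e. $h(d+1) \leq d$. But $h$ is increasing and $d < d+1$, so $h(d) \leq h(d+1) \leq d$. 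Combined with $h(d) \geq d$ this forces $h(d) = d$.

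The only subtlety — and the step I expect to need the most care in writing up — is making sure the two cases ($f,g$ both increasing versus both decreasing) really do both yield an \emph{increasing} composition $h$, and that the argument does not secretly assume $M = N$ or use any property of the integers beyond finiteness and total order. Neither issue is serious: monotonicity of a composition depends only on the parities of the two monotonicities, and the endpoint/maximality argument above uses only that $\llbracket 0, M \rrbracket$ is a finite chain with least element $0$ and greatest element $M$. One should also record explicitly that $c = f(d)$ lies in the stated codomain $\llbracket 0, N \rrbracket$, which is immediate from the type of $f$. This lemma will then be the combinatorial engine behind the existence of a pure Bayesian Nash Equilibrium in the discrete DGPD, with $f$ and $g$ instantiated as the (monotone, by Proposition~\ref{prop:threshold_monotonicity} and Proposition~\ref{prop:zeta_monotonicity}) best-response maps on the finite index sets of the two players' type spaces.
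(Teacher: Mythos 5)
Your proof is correct, and it departs from the paper's own argument in two identifiable ways. First, the paper composes in the other order: it sets $h := f\circ g$ on $\llbracket 0,N\rrbracket$, so the fixed point it finds is already the desired $c$ with $f(g(c))=c$ and no transport step is needed; you fix $g\circ f$ on $\llbracket 0,M\rrbracket$ and then push the fixed point forward through $f$, an extra step that you handle correctly ($c=f(d)$, $g(c)=d$, $f(g(c))=c$). Second, and more substantively, the core fixed-point fact for an increasing self-map of a finite integer interval is proved differently: the paper iterates from $0$, observing that the orbit $0\le h(0)\le h^{2}(0)\le\cdots$ is non-decreasing and confined to a finite set, hence stabilizes at the least $k$ with $h^{k}(0)=h^{k+1}(0)$, which is then a fixed point (a Kleene-style bottom-up iteration); you instead take the largest $d$ with $h(d)\ge d$ and use maximality together with monotonicity to force $h(d)=d$ (a discrete Knaster--Tarski / intermediate-value argument). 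Both routes are elementary and use only that the domain is a finite chain with endpoints, so neither needs $M=N$; your write-up makes the existence step more explicit (the paper leaves the monotonicity of the orbit to the reader), while the paper's version is more compact and hints at the iterative computation of the equilibrium index that its discrete algorithms actually perform. Your treatment of the both-decreasing case via the parity of composed monotonicities is also consistent with how the paper silently handles it.
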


\begin{proof}
We first notice that $h:=f\circ g$ is increasing and $0 \leq h(x) \leq N$ for $0\leq x\leq N$. Hence, there exists a least non-negative integer $k $ such that $c:=h^k(0)=h^{k+1}(0)$ and it follows that $h(c)=c$. 
\end{proof}

\begin{thm}\label{thm:finite_pure_ne}
For any DGPD with finite type space there exists a pure Bayesian Nash Equilibrium made of threshold strategies such that for both agents, the threshold $\theta_i^* \in [\min (\lambda, \mu), \max (\lambda, \mu)]$.
\end{thm}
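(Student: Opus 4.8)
The plan is to reduce the search for a pure equilibrium to the fixed‑point statement of Lemma~\ref{fixedpointintervals}. Write $\Theta_1=\{\theta_1^0<\cdots<\theta_1^{N_1}\}$ and $\Theta_2=\{\theta_2^0<\cdots<\theta_2^{N_2}\}$. Up to equivalence, a pure threshold strategy for agent $i$ is completely described by a \emph{cutoff index} $k_i\in\llbracket 0,N_i+1\rrbracket$: the agent plays $D$ on $\theta_i^0,\dots,\theta_i^{k_i-1}$ and $C$ on $\theta_i^{k_i},\dots,\theta_i^{N_i}$ (so $k_i=0$ is the all‑$C$ strategy and $k_i=N_i+1$ the all‑$D$ one). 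Since the type space is finite, the quantity $\zeta_i^C(\sigma_{-i})=\sum_{x\ge k_{-i}}p_i(\theta_{-i}^x)$ is determined by the opponent's cutoff $k_{-i}$ alone.

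Next I would define agent $i$'s best‑response map on cutoff indices. Given $k_{-i}$, compute $\zeta_i^C$ as above and set $t_i:=\theta_i^*\!\big(\zeta_i^C\big)$, the best‑response threshold value (recall a best response must be a threshold strategy with precisely this threshold). Put $\mathrm{BR}_i(k_{-i}):=\#\{\,x:\theta_i^x\le t_i\,\}$; this implements a genuine threshold strategy with threshold $t_i$, resolving the indifferent type (if $t_i\in\Theta_i$) in favour of $D$, so it is a best response to $k_{-i}$, and $\mathrm{BR}_i$ is a well‑defined function $\llbracket 0,N_{-i}+1\rrbracket\to\llbracket 0,N_i+1\rrbracket$. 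Then I would check monotonicity: by Proposition~\ref{prop:zeta_monotonicity} the value $\zeta_i^C$ is non‑increasing in $k_{-i}$; by Proposition~\ref{prop:threshold_monotonicity} the threshold $\theta_i^*(\cdot)$ is monotone with direction fixed by $\mathrm{sign}(\mu-\lambda)$; and $t_i\mapsto\#\{x:\theta_i^x\le t_i\}$ is non‑decreasing. Composing these, $\mathrm{BR}_1$ and $\mathrm{BR}_2$ are monotone in the \emph{same} direction (both non‑decreasing if $\mu>\lambda$, both non‑increasing if $\mu<\lambda$, both constant if $\mu=\lambda$), since $\theta_i^*(\cdot)$ is literally the same function for $i=1,2$.

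Now I would apply Lemma~\ref{fixedpointintervals} with $f=\mathrm{BR}_2$ and $g=\mathrm{BR}_1$ (in the constant case the fixed point is immediate), obtaining $c\in\llbracket 0,N_2+1\rrbracket$ with $\mathrm{BR}_2(\mathrm{BR}_1(c))=c$. Setting $k_2^\ast:=c$ and $k_1^\ast:=\mathrm{BR}_1(c)$, the strategy with cutoff $k_1^\ast$ is a best response to the one with cutoff $k_2^\ast$ by definition of $\mathrm{BR}_1$, and $\mathrm{BR}_2(k_1^\ast)=c=k_2^\ast$ gives the converse; hence the corresponding pair of pure threshold strategies is a pure Bayesian Nash Equilibrium. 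The equilibrium thresholds are $\theta_i^*(\zeta_i^C)$ with $\zeta_i^C\in[0,1]$, so since $\theta_i^*$ is monotone with $\theta_i^*(0)=\mu$ and $\theta_i^*(1)=\lambda$, they lie in $[\min(\lambda,\mu),\max(\lambda,\mu)]$, as claimed.

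I expect the delicate point to be the passage to cutoff indices: one must handle the indifference case ($t_i$ equal to an actual type value, i.e.\ the mixed component being reachable) by a \emph{fixed} tie‑break so that $\mathrm{BR}_i$ is genuinely a monotone map between the finite ordered index sets, which is exactly the hypothesis Lemma~\ref{fixedpointintervals} needs; the rest is bookkeeping on top of the monotonicity facts already established.
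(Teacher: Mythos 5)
Your proof is correct and follows essentially the same route as the paper: both reduce the problem to a monotone best-response map between the finite ordered sets of cutoff/interval indices (using Propositions~\ref{prop:threshold_monotonicity} and~\ref{prop:zeta_monotonicity} to get that the two maps are monotone in the same direction) and then invoke Lemma~\ref{fixedpointintervals} to produce mutually best-responding pure threshold strategies, with the threshold range following from $\zeta_i^C\in[0,1]$ and $\theta_i^*(0)=\mu$, $\theta_i^*(1)=\lambda$. Your explicit tie-break for the indifferent type plays the same role as the paper's arbitrary choice of adjacent interval when $\theta_{-i}^*\in\Theta_{-i}$, and is in fact spelled out a bit more carefully.
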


\begin{proof}
    We suppose that $\mu > \lambda$ (the reasoning for $\mu < \lambda$ is similar). Suppose that agent $-i$ plays a threshold strategy $(\sigma_{-i}, \theta_{-i}^*)$. Then the agent $i$'s best response is also a threshold strategy $(\sigma_i, \theta_i^*)$. By the monotonicity of $\theta_i^*$ and $\zeta_i^C$ (Propositions~\ref{prop:threshold_monotonicity} and \ref{prop:zeta_monotonicity}), if agent $-i$ increases its threshold then the threshold of agent $i$'s best response also increases (if $\mu > \lambda$ then agent $i$'s best response threshold decreases).

    Now consider the partition of $[0,1]$ into intervals according to agents' type spaces:
    \begin{align}
        &\Theta_i = \{\theta_i^1, \theta_i^2, ..., \theta_i^{n_i}\} \quad \text{with} \quad 0 \leq \theta_i^1 < \theta_i^2 < ... < \theta_i^{n_i} \leq 1\\
        &I_i^0 = [0, \theta_i^1], \quad I_i^1 = [\theta_i^1, \theta_i^2], \quad ... \quad I_i^{n_i} = [\theta_i^{n_i}, 1]
    \end{align}
    
    For agent $i$'s threshold strategy $(\sigma_i, \theta_i^*)$ and agent $-i$'s best response $(\sigma_{-i}, \theta_{-i}^*)$, there exists (1) $k_i \in \llbracket 0, n_i \rrbracket$ such that $\theta_i^* \in I_i^{k_i}$ and (2) $k_{-i} \in \llbracket 0, n_{-i} \rrbracket$ such that $\theta_{-i}^* \in I_{-i}^{k_{-i}}$. Note that if $\theta_{-i}^* \in \Theta_{-i}$, then it belongs to two adjacent intervals $I_{-i}^{k_{-i}}$ and $I_{-i}^{k_{-i}+1}$. In this case we arbitrarily choose to take $I_{-i}^{k_{-i}}$.
    
    We define the transition functions $t_{1 \rightarrow 2}: \llbracket 0, n_1 \rrbracket \rightarrow \llbracket 0, n_2 \rrbracket$ and $t_{2 \rightarrow 1}: \llbracket 0, n_2 \rrbracket \rightarrow \llbracket 0, n_1 \rrbracket$ such that $k_{-i} = t_{i \rightarrow -i}(k_i)$. In order to show that there exists a pure Nash Equilibrium, we need to show that:
    \[ \exists (k_1,k_2)\in \llbracket 0,n_1 \rrbracket \times \llbracket 0,n_2\rrbracket \mid k_1 = t_{2 \rightarrow 1}(k_2), \quad k_2 = t_{1 \rightarrow 2}(k_1) \]
    
    Or equivalently we search for $k_i$ such that $k_i = t_{-i \rightarrow i}(t_{i \rightarrow -i}(k_i))$. By Lemma~\ref{fixedpointintervals} with $t_{1 \rightarrow 2}$ and $t_{2 \rightarrow 1}$ as $f$ and $g$, we can find a solution $k_i$.
\end{proof}

\subsection{Example}

Suppose that $G$ is a Double Game Prisoner Dilemma with uniform prior such that  utilities for both agents are given according to Table~\ref{example:table1} and \ref{example:table2} and type space	$\Theta_i = \{t/60: t \in \llbracket 0, 60 \rrbracket \}$ for $i=1,2$. We obtain $\mu=1/5$ and $\lambda=1/4$.

\begin{table}[!htb]
   \begin{minipage}{0.4\textwidth}
     \centering
     \begin{tabular}{|c|c|c|}
        \cline{2-3}
        \multicolumn{1}{c|}{} & $C$ & $D$\\\hline
        C &$(16, 16) $&$(3, 20)$\\\hline
        D &$(20, 3) $&$(6, 6)$\\\hline
    \end{tabular}
    \caption{Utilities for the 1\textsuperscript{st} game}
    \label{example:table1}
    
    \vspace{10mm}
    
    \begin{tabular}{|c|c|c|}
        \cline{2-3}
        \multicolumn{1}{c|}{} & $C$ & $D$\\\hline
        C& $(15, 15) $&$(15,3 )$ \\\hline
        D &$(3, 15) $&$(3, 3)$\\\hline
    \end{tabular}    
    \caption{Utilities for the 2\textsuperscript{nd} game}
    \label{example:table2}
   \end{minipage}\hfill
   \begin{minipage}{0.6\textwidth}
     \centering
     \begin{tikzpicture}[scale=.6]
        \fill[blue!10!white] (0,0) rectangle (3,3);
        \fill[blue!10!white] (6,0) rectangle (9,3);
        \fill[blue!10!white] (0,6) rectangle (3,9);
        \fill[blue!10!white] (6,6) rectangle (9,9);
        \draw[black, thick] (0,0) -- (0,9);
        \draw[black, thick] (3,0) -- (3,9);
        \draw[black, thick] (6,0) -- (6,9);
        \draw[black, thick] (9,0) -- (9,9);
        \draw[black, thick] (0,0) -- (9,0);
        \draw[black, thick] (0,3) -- (9,3);
        \draw[black, thick] (0,6) -- (9,6);
        \draw[black, thick] (0,9) -- (9,9);
        \draw[dashed] (4,0) -- (4,9);
        \draw[dashed] (5,0) -- (5,9);
        \draw[dashed] (0,4) -- (9,4);
        \draw[dashed] (0,5) -- (9,5);
        
        \draw(0,3) node[left] {$\tfrac{1}{5}$};
        \draw(0,6) node[left] {$\tfrac{1}{4}$};
        \draw(0,9) node[left] {$1$};
        \draw(0,0) node[below] {$0$};
        \draw(3,0) node[below] {$\tfrac{1}{5}$};
        \draw(6,0) node[below] {$\tfrac{1}{4}$};
        \draw(9,0) node[below] {$1$};
        \draw(4,0) node[below] {$\tfrac{13}{60}$};
        \draw(5,0) node[below] {$\tfrac{14}{60}$};
        \draw(0,4) node[left] {$\tfrac{13}{60}$};
        \draw(0,5) node[left] {$\tfrac{14}{60}$};

        \node at (1.5,7.5) {\footnotesize$(D,C)$};	
        \node at (1.5,1.5) {\footnotesize$(D,D)$};		
        \node at (7.5,7.5) {\footnotesize$(C,C)$};		
        \node at (7.5,1.5) {\footnotesize$(C,D)$};
    \end{tikzpicture}
     \caption{Representation of both agent's type space}
   \end{minipage}
\end{table}

To find a pure Bayesian NE, one has to find a pair $(s_1, s_2)$ of threshold strategies such that $s_1$ is a best response to $s_2$ and $s_2$ is a best response to $s_1$. By symmetry of the game, we can expect that $s_1 = s_2$ (i.e. they have the same threshold). We have:

\begin{align*}
    \theta_i^*(\zeta_i^C) = \frac{\zeta_i^C(20 - 16) + (1 - \zeta_i^C)(6 - 3)}{\zeta_i^C(20 - 16) + (1 - \zeta_i^C)(6 - 3) + (15 - 3)} = \frac{\zeta_i^C + 3}{\zeta_i^C + 15}
\end{align*}

while $\zeta_i^C$ can take 3 different values: 

\begin{align*}
    \theta_{-i}^* \in \left]\frac{1}{5}, \frac{13}{60}\right[ \implies \zeta_i^C = \frac{48}{61} \\
    \theta_{-i}^* \in \left]\frac{13}{60}, \frac{14}{60}\right[ \implies \zeta_i^C = \frac{47}{61} \\
    \theta_{-i}^* \in \left]\frac{14}{60}, \frac{1}{4}\right[ \implies \zeta_i^C = \frac{46}{61}
\end{align*}

Possible values for $\theta_i^*$ are $\theta_i^*(\frac{48}{61}) \approx 0.239875$, $\theta_i^*(\frac{47}{61}) \approx 0.239085$ or $\theta_i^*(\frac{46}{61}) \approx 0.238293$. Those three values are in the interval $]\frac{14}{60}, \frac{1}{4}[$. So the pair $(s_1, s_2)$ where

\begin{align*}
s_1(\theta_1^*) = s_2(\theta_2^*) = 
    \begin{cases}
    \text{$D$ if $\theta_i^* < \frac{1}{4}$}\\
    \text{$C$ if $\theta_i^* \geq \frac{1}{4}$}
    \end{cases}
\end{align*}

is a pure (Bayesian) strategy Nash Equilibrium for the game $G$.

\subsection{On a simple multigame classification}

In this section, we consider $G_{2,2,2}$ multigames with finite type space. Both actions are still denoted by C and D (even if they are not necessarily associated to cooperation or defection). The payoff matrix $U$ is not constrained as opposed to DGPD configuration. First, we define a simple classification of such games according to properties of $U$.

\begin{dfn}
    We denote the \textit{type space configuration} by $(\Theta_i, p_i)$. Any payoff matrix $U$ belongs to one of the following sets:
    \begin{enumerate}
        \item The \textit{full set}: payoff matrices $U$ such that for any type space configuration the game $G=\langle U,(\Theta_i, p_i)\rangle$ has a pure Nash Equilibrium.
        \item The \textit{solutionless set}: payoff matrices $U$ such that for any type space configuration the game $G=\langle U,(\Theta_i, p_i)\rangle $ has no pure Nash Equilibrium.
        \item The \textit{hybrid set}: payoff matrices $U$ such that the existence of a pure NE for $G=\langle U,(\Theta_i, p_i)\rangle $ depends on the type space configuration.
    \end{enumerate}
\end{dfn}

\begin{prop}
    The full set is not empty: it contains matrices $U$ with the DGPD payoff constraints.
    The hybrid set is also not empty.
\end{prop}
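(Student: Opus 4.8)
The plan is to establish the two non-emptiness claims separately, using constructions that are easy to verify. For the full set, the work is essentially already done: Theorem~\ref{thm:finite_pure_ne} shows that every DGPD with finite type space has a pure Bayesian Nash Equilibrium, regardless of the type space configuration $(\Theta_i, p_i)$. So any payoff matrix $U$ satisfying the DGPD constraints ($s=z$, $t>r>y>p>s$, $r>(t+s)/2$, $y>(r+p)/2$) lies in the full set, and such matrices exist (Figure~\ref{fig:conclusion_1} exhibits concrete coefficients). One subtlety to address is that the DGPD definition fixes the dimension to $m=2$ and imposes a particular block structure on $U$; I would note that the classification in this section is precisely about $G_{2,2,2}$ multigames, so this is exactly the right setting, and the payoff matrix of a DGPD is a legitimate choice of $U$ here. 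Hence the full set is non-empty.

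For the hybrid set, the plan is to exhibit a single payoff matrix $U$ together with one type space configuration admitting a pure NE and another configuration admitting none. The natural candidate is a matrix built around a game that has no pure equilibrium in the complete-information case (e.g. a Matching Pennies-type local game), combined with a second local game. When every agent's type puts all weight on the ``bad'' coordinate, the multigame reduces to the cyclic game and has no pure NE; when types are skewed toward the other coordinate, a dominant-action argument forces a pure NE to exist. Concretely I would take one local game whose payoffs make $C$ strictly dominant and the other a zero-sum cyclic game, so that a type configuration concentrated on the first coordinate falls under a purely-cooperative-type argument (pure NE exists), while a degenerate configuration concentrated on the second coordinate gives a finite two-player game with no pure NE. Verifying the ``no pure NE'' side just requires checking the four action profiles against the best responses for the chosen degenerate prior; verifying the ``pure NE exists'' side follows from a strict dominance observation analogous to Proposition~\ref{prop:purely_competitive}.

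The main obstacle is the ``no pure NE'' direction: one must choose the type space configuration so that the induced expected-payoff game genuinely has the cyclic best-response structure, which requires the weight on the cooperative coordinate to be small enough (possibly zero, i.e. a degenerate prior supported on a single type with $\theta_i$ in the competitive region) that the dominant action in the first local game does not override the cycle. I would handle this by taking $\Theta_i$ to be a singleton (or a two-point set) chosen so that the scalarized payoffs $\vec\theta_i\cdot(u_{ij})_j$ reproduce exactly a Matching-Pennies matrix, for which the absence of a pure NE is immediate. The rest is routine bookkeeping, so I would state the two configurations, write out the resulting $2\times 2$ expected-payoff bimatrix in each case, and read off the conclusions.
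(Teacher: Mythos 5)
Your proposal is correct. For the full set you argue exactly as the paper does: invoke Theorem~\ref{thm:finite_pure_ne} to conclude that any payoff matrix satisfying the DGPD constraints yields a pure Bayesian NE for every finite type space configuration, and such matrices clearly exist. For the hybrid set, however, you take a genuinely different route. The paper exhibits one concrete payoff matrix (Table~\ref{tab:double_game_payoff}) together with two explicit three-point type space configurations with full-support priors, asserting that the first game $G$ has no pure Bayesian NE while the perturbed $G'$ does; the verification of the ``no pure NE'' claim there is left implicit (effectively an exhaustive check over threshold strategies). Your construction instead pairs a local game with a strictly dominant action with a Matching-Pennies local game, and uses a singleton type space concentrated on the cyclic coordinate (no pure NE, immediately, since the induced game is Matching Pennies) versus one concentrated on the dominant coordinate (pure NE by strict dominance). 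This buys hand-verifiability: the absence of a pure NE reduces to a textbook fact rather than a case check or computation. What it costs is that it leans on a degenerate, effectively complete-information configuration; this is formally admissible, since the section's notion of type space configuration $(\Theta_i,p_i)$ places no lower bound on $|\Theta_i|$, but it makes hybridness hinge on an edge case, whereas the paper's example shows the phenomenon already occurs for genuinely Bayesian (multi-type, full-support) configurations. Your suggested non-degenerate variant with a two-point type space would close that gap, but note it needs a little more care than you indicate: with two types a player has four pure strategies, and a half-half split can make $\delta_{i2}(\sigma_{-i})=0$ so the dominant first game decides the response; concentrating the prior near the cyclic type (rather than just moving the types) is the cleanest way to preserve the cyclic best-response structure. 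Finally, writing out the explicit $2\times 2$ bimatrices, as you say, is routine bookkeeping.
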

\begin{proof}
    The first set is obviously not empty in view of Theorem~\ref{thm:finite_pure_ne}. The second set is not empty as we can find $G$ and $G'$ sharing the same payoff matrix such that one has a pure NE and the other not (see next examples).
\end{proof}

Consider the uniform double game $G$ with utilities for agents $i= 1,2$ given by Table \ref{tab:double_game_payoff} with type space $\Theta_1 = \{0.2, 0.5, 0.6\}$, $\Theta_2 = \{0.2, 0.4, 0.8\}$ and the prior $p_1$, $p_2$ such that:

\begin{align*}
    p_1(\theta_1 = 0.2) = 0.3 \quad p_1(\theta_1 = 0.5) = 0.4 \quad p_1(\theta_1 = 0.6) = 0.3\\
    p_2(\theta_2 = 0.2) = 0.3 \quad p_2(\theta_2 = 0.4) = 0.4 \quad p_2(\theta_2 = 0.8) = 0.3
\end{align*}

\begin{table}[h]
		\centering
		\begin{minipage}{.2\linewidth}
			\begin{tabular}{|c|c|c|}
				\cline{2-3} 
				\multicolumn{1}{c|}{} & $C$ & $D$  \\ \hline
				C &$(3, 4) $&$(4, 3)$\\\hline
				D &$(5, 0) $&$(1, 1)$\\\hline
			\end{tabular}
		\end{minipage}
		\hspace{20mm}
		\begin{minipage}{.2\linewidth}
			\begin{tabular}{|c|c|c|}
				\cline{2-3} 
				\multicolumn{1}{c|}{} & $C$ & $D$  \\ \hline
				C& $(2, 4) $&$(6, 5)$ \\\hline
				D &$(7, 2) $&$(5, 2)$\\\hline
			\end{tabular}
		\end{minipage}	
		\caption{Utilities for both basic games.}
		\label{tab:double_game_payoff}
	\end{table}	
	
This game has no pure Bayesian Nash Equilibrium. Consider a slight variation $G'$ of the game $G$ such that utilities for agents $i=1,2$ are given by Table \ref{tab:double_game_payoff} with type space $\Theta_1 = \{0.2, 0.5, 0.7\}$, $\Theta_2 = \{0.2, 0.4, 0.8\}$ and the prior $p_1$, $p_2$ such that:

\begin{align*}
    p_1(\theta_1 = 0.2) = 0.3 \quad p_1(\theta_1 = 0.5) = 0.4 \quad p_1(\theta_1 = 0.7) = 0.3\\
    p_2(\theta_2 = 0.2) = 0.3 \quad p_2(\theta_2 = 0.4) = 0.4 \quad p_2(\theta_2 = 0.8) = 0.3
\end{align*}

Consider $\sigma_1$ a C/D threshold strategy with $\theta_1^* \in ]0.5, 0.7[$ and $\sigma_2$ a C/D threshold strategy with $\theta_2^* \in ]0.2, 0.4[$. The pair $(\sigma_1,\sigma_2)$ is a pure Bayesian Nash Equilibrium for the game $G'$. \\

Note that we don't give any particular property for the solutionless set. We postulate that this set is empty (see Section~\ref{ALRe} on algorithmic results). Recall the following formulae for the $G_{2,2,2}$ configuration:

\begin{align}
    \overline{U_i}(a_i, \theta_i, \sigma_{-i}) &= (1 - \theta_i) \overline{u}_{i1}(a_i, \sigma_{-i}) + \theta_i \overline{u}_{i2}(a_i, \sigma_{-i})\\
    \overline{u}_{ij}(a_i, \sigma_{-i}) &= \zeta_i^C(\sigma_{-i}) u_{ij}(a_i, C) + (1 - \zeta_i^C(\sigma_{-i})) u_{ij}(a_i, D)
\end{align}

Observe that with no assumption on the payoff matrix, there's no guarantee that $\overline{U_i}(C, \theta_i, \sigma_{-i})$ and $\overline{U_i}(D, \theta_i, \sigma_{-i})$ will cross for a given $\sigma_{-i}$. Moreover, the crossing point $\theta_i^*$ (if it exists) is not guaranteed to be inside $[0, 1]$ as illustrated by Figure~\ref{example:threshold_inside} and Figure~\ref{example:threshold_outside}. It cuts $\R$ into two regions, one in which the best action is C and the other in which the best action is D. If the left region is C then the resulting strategy is a \textit{C/D strategy} (Figure~\ref{example:threshold_inside}) and if the left region is D then the resulting strategy is a \textit{D/C strategy} (Figure~\ref{example:threshold_outside}). We call this the {\em strategy type}. For any DGPD configuration, a best response is always a D/C strategy.

\begin{figure}[!htb]
   \begin{minipage}{0.5\textwidth}
     \centering  
     \includegraphics[width = \hsize]{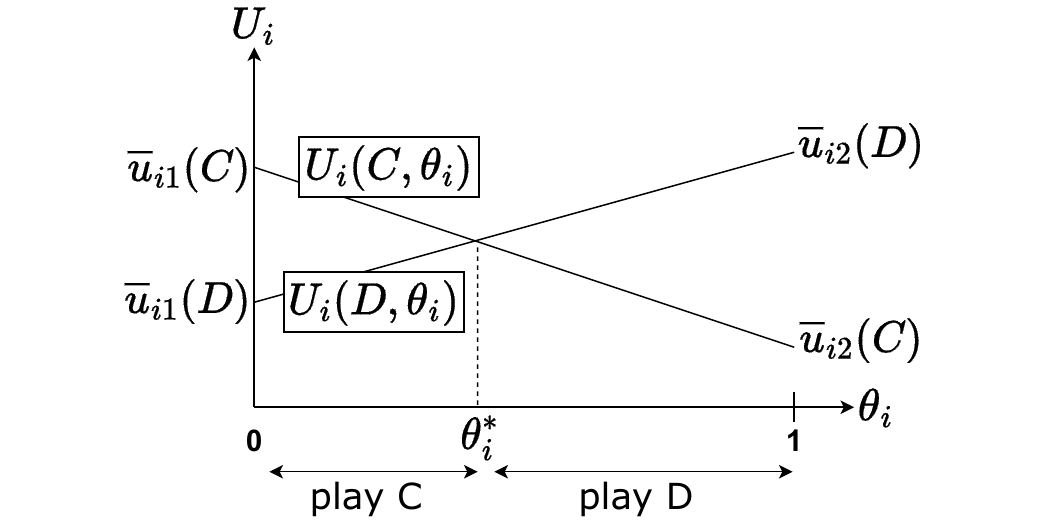}
    \caption{Crossing point (threshold) inside $[0, 1]$}
    \label{example:threshold_inside}
   \end{minipage}
   \begin{minipage}{0.5\textwidth}
     \centering
   \includegraphics[width = \hsize]{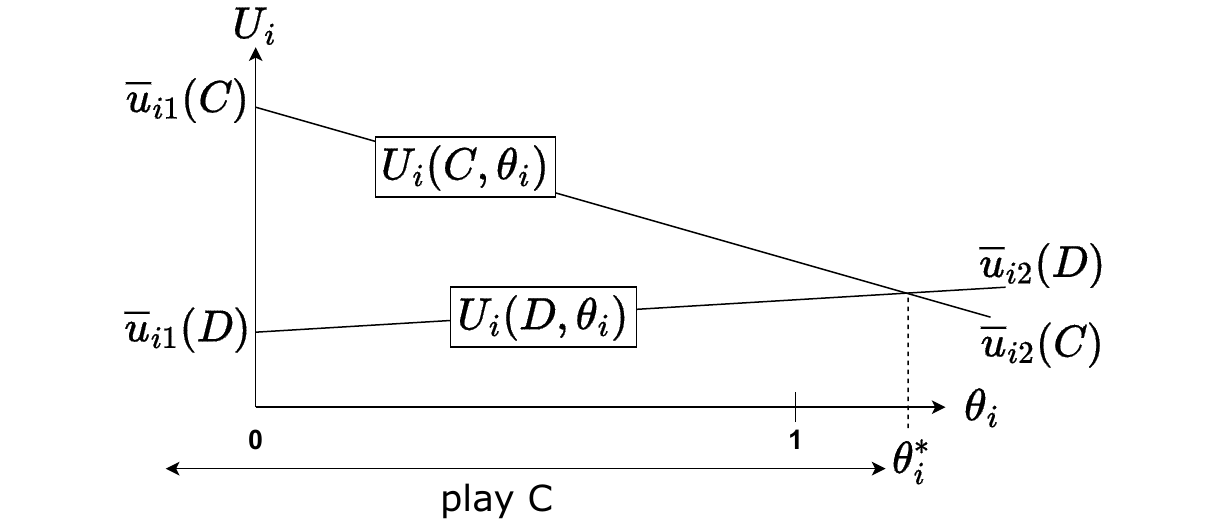}
     \caption{Crossing point outside $[0, 1]$}
     \label{example:threshold_outside}
   \end{minipage}
\end{figure}

As we did for the DGPD, we define the threshold function $\theta_i^*(\sigma_{-i})$ for agent $i$:

\begin{align}
    \theta_i^*(\sigma_{-i}) = \frac{\overline{u}_{i1}(C, \sigma_{-i}) - \overline{u}_{i1}(D, \sigma_{-i})}{(\overline{u}_{i1}(C, \sigma_{-i}) - \overline{u}_{i1}(D, \sigma_{-i})) - (\overline{u}_{i2}(C, \sigma_{-i}) - \overline{u}_{i2}(D, \sigma_{-i}))}
\end{align}

When $\overline{u}_{i1}(C, \sigma_{-i}) - \overline{u}_{i1}(D, \sigma_{-i}) = \overline{u}_{i2}(C, \sigma_{-i}) - \overline{u}_{i2}(D, \sigma_{-i})$, both utility functions have the same slopes and the threshold function is not defined. As long as $\overline{u}_{i1}(C, \sigma_{-i}) \neq \overline{u}_{i1}(D, \sigma_{-i})$, utility functions are not equal so there is one best action (either C or D). In this case the best response is also a threshold strategy with the threshold $\theta_i^* = \pm \infty$. Otherwise, both utility functions are equal (there's an infinite number of crossing points). Because of the latter case, we cannot state that any best response is made of threshold strategies anymore.

\begin{dfn}[D/C threshold strategy]
    A threshold D/C strategy $\sigma_i$ with threshold $\theta_i^* \in \R\cup\{-\infty, +\infty\}$ for agent $i$ is a strategy such that:
    \begin{align}
        \sigma_i(\theta_i) = 
        \begin{cases}
            D & \text{$\theta_i < \theta_i^*$}\\
            C & \text{$\theta_i > \theta_i^*$}\\
            \alpha_i C + (1 - \alpha_i) D & \text{$\theta_i = \theta_i^*$}
        \end{cases}
    \end{align}
\end{dfn}

Now we can express the threshold as a function of $\zeta_i^C \in [0,1]$ instead of $\sigma_{-i}$:

\begin{align}
    \theta_i^*(\zeta_i^C) &= \frac{\delta_{i1}^D + \zeta_i^C(\delta_{i1}^C - \delta_{i1}^D)}{\delta_{i1}^D - \delta_{i2}^D + \zeta_i^C(\delta_{i1}^C - \delta_{i1}^D + \delta_{i2}^D - \delta_{i2}^C)}\\
    \mbox{where }\delta_{ij}^{a_{-i}} &:= u_{ij}(C, a_{-i}) - u_{ij}(D, a_{-i}) 
\end{align}

The case of equal slopes is reached at the forbidden value $\widetilde \zeta_i^C$:

\begin{align}
    \widetilde \zeta_i^C := -\frac{\delta_{i1}^D - \delta_{i2}^D}{\delta_{i1}^C - \delta_{i2}^C + \delta_{i2}^D - \delta_{i1}^D} 
\end{align}

And therefore the case of equal utility functions happens when: $$\widetilde \zeta_i^C = \frac{\delta_{i1}^D}{\delta_{i1}^D - \delta_{i1}^C}$$

We notice that the graph of $\theta_i^*(\zeta_i^C)$ is split into two regions by the forbidden value (Figure~\ref{example:inside} and Figure~\ref{example:outside}). Each region is associated with a different strategy type. Thus, if $\widetilde \zeta_i^C$ is not in $[0, 1]$ then agent $i$ will always play the same strategy type notwithstanding its opponent's strategy. In Figure~\ref{example:outside} the forbidden value is outside $[0,1]$ so agent $i$ will only play D/C strategy. In Figure~\ref{example:inside}, the forbidden value is in $[0,1]$, so agent $i$ may play both strategy types depending on the opponent's strategy $\sigma_{-i}$: if $\zeta_i^C(\sigma_{-i}) < \widetilde \zeta_i^C$ then agent $i$ plays a C/D strategy and if $\zeta_i^C(\sigma_{-i}) > \widetilde \zeta_i^C$ then agent $i$ plays a D/C strategy. Suppose that $\widetilde \zeta_i^C \notin [0,1]$ for $i=1,2$. Both agents always stick to the same strategy type and there are 3 cases: (1) both play C/D, (2) both play D/C (like the DGPD) and (3) one plays C/D while the other plays D/C.

\begin{figure}[!htb]
   \begin{minipage}{0.5\textwidth}
     \centering  
     \includegraphics[width = \hsize]{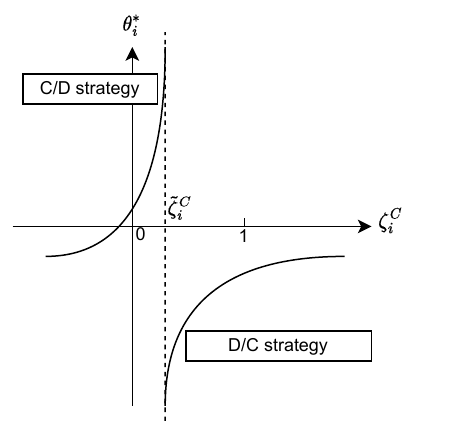}
    \caption{Forbidden value inside $[0, 1]$}
    \label{example:inside}
   \end{minipage}
   \begin{minipage}{0.5\textwidth}
     \centering
   \includegraphics[width = \hsize]{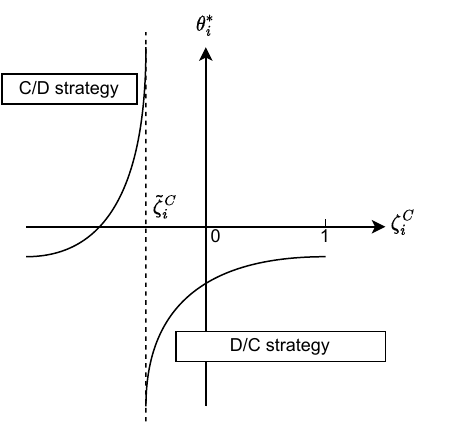}
     \caption{Forbidden value outside $[0, 1]$}
     \label{example:outside}
   \end{minipage}
\end{figure}

In order to characterize the variations of $\theta_i^*$, define:

\begin{align}
    \Delta_i := \delta_{i1}^D \delta_{i2}^C - \delta_{i1}^C \delta_{i2}^D
\end{align}

\begin{prop}[$\theta_i^*$ monotonicity]
    The threshold function $\theta_i^*(\zeta_i^C)$ is monotonic and satisfies:
    \begin{enumerate}
        \item if $\Delta_i > 0$ then $\theta_i^*(\zeta_i^C)$ is increasing,
        \item if $\Delta_i < 0$ then $\theta_i^*(\zeta_i^C)$ is decreasing,
        \item if $\Delta_i = 0$ then $\theta_i^*(\zeta_i^C)$ is constant.
    \end{enumerate}
\end{prop}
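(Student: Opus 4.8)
The plan is to exploit that, once the opponent's behaviour is summarised by the scalar $\zeta_i^C\in[0,1]$, the threshold $\theta_i^*$ is a homographic (M\"obius) function of the single real variable $\zeta_i^C$; such functions are monotone on each interval of their domain, with a derivative whose sign is constant and can be read directly off the coefficients.

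First I would put $\theta_i^*(\zeta_i^C)$ in canonical form $\frac{a_i+b_i\zeta_i^C}{c_i+e_i\zeta_i^C}$ by reading off, from the displayed formula for $\theta_i^*(\zeta_i^C)$,
\[ a_i=\delta_{i1}^D,\quad b_i=\delta_{i1}^C-\delta_{i1}^D,\quad c_i=\delta_{i1}^D-\delta_{i2}^D,\quad e_i=(\delta_{i1}^C-\delta_{i2}^C)-(\delta_{i1}^D-\delta_{i2}^D). \]
Differentiating gives $\frac{d\theta_i^*}{d\zeta_i^C}=\frac{b_ic_i-a_ie_i}{(c_i+e_i\zeta_i^C)^2}$, so the sign of the derivative, wherever $\theta_i^*$ is finite, equals the sign of the numerator $b_ic_i-a_ie_i$ and is therefore independent of $\zeta_i^C$.

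The next (routine) step is to expand $b_ic_i-a_ie_i$ and verify that it collapses to exactly $\Delta_i=\delta_{i1}^D\delta_{i2}^C-\delta_{i1}^C\delta_{i2}^D$. From this the three cases follow at once: if $\Delta_i>0$ the derivative is everywhere positive and $\theta_i^*$ is increasing; if $\Delta_i<0$ it is everywhere negative and $\theta_i^*$ is decreasing; and if $\Delta_i=0$ then $a_ie_i=b_ic_i$, so the affine numerator $a_i+b_i\zeta_i^C$ and affine denominator $c_i+e_i\zeta_i^C$ are proportional, whence $\theta_i^*$ equals the (constant) proportionality ratio on its whole domain, which is case (3).

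There is no genuine obstacle here; the only point deserving a line of care is that monotonicity is meant on each interval on which $\theta_i^*$ is finite---the homography has its pole at $\zeta_i^C=-c_i/e_i=\widetilde\zeta_i^C$ when $e_i\neq0$, consistent with the two-region picture discussed above (and on all of $[0,1]$ when $\widetilde\zeta_i^C\notin[0,1]$)---and that the degenerate subcases ($e_i=0$, making $\theta_i^*$ affine with constant derivative $b_i/c_i$, or $b_i=0$) are all subsumed by the single derivative computation, since the sign of $b_ic_i-a_ie_i$ is the sign of $\Delta_i$ in every case.
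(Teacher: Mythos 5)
Your proposal is correct and follows essentially the same route as the paper, which likewise observes that $\theta_i^*(\zeta_i^C)$ is a homography and obtains the sign condition on $\Delta_i$ from its derivative; your expansion $b_ic_i-a_ie_i=\delta_{i1}^D\delta_{i2}^C-\delta_{i1}^C\delta_{i2}^D=\Delta_i$ checks out. The extra remarks on the pole at $\widetilde\zeta_i^C$ and the degenerate subcases are fine but simply make explicit what the paper leaves implicit.
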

\begin{proof}
Simply compute the derivative of the function $\theta_i^*(\zeta_i^C)$ which is an homography  $f: x \rightarrow \frac{ax+b}{cx+d}$ and get the condition on $\Delta_i$.
\end{proof}

\begin{prop}\label{prop:full_matrix_conditions}
    Let $U$ be the payoff matrix of a $2$-player double game such that:
    \begin{enumerate}
        \item $\widetilde \zeta_i^C \notin [0,1]$ for $i=1,2$ where $\widetilde \zeta_i^C$ is the forbidden value of $\theta_i^*(\zeta_i^C)$,
        \item $\Delta_1$ and $\Delta_2$ have the same sign if both agents play the same strategy type, or have opposite sign otherwise.
    \end{enumerate}
    Then for any type space configuration $(\Theta_i, p_i)$ the game $G=\langle U, (\Theta_i, p_i)\rangle $ has a pure Bayesian Nash Equilibrium.
\end{prop}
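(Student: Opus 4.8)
The plan is to mirror the proof of Theorem~\ref{thm:finite_pure_ne}, replacing its DGPD-specific inputs (the fixed D/C strategy type and the dichotomy between $\mu<\lambda$ and $\mu>\lambda$) by the analogues forced by hypotheses (1) and (2). First I would use (1): since $\zeta_i^C(\sigma_{-i})\in[0,1]$ for every opponent strategy while the pole $\widetilde\zeta_i^C$ of the homography $\theta_i^*(\zeta_i^C)$ lies outside $[0,1]$, the threshold function is defined and finite at every attainable $\zeta_i^C$, and in particular the degenerate situation of identically equal expected payoffs for $C$ and $D$ never arises. Consequently each agent $i$ has a best response that is a genuine threshold strategy of one fixed strategy type, C/D or D/C, the \emph{same} type for all $\sigma_{-i}$. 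Record which of the three configurations holds: both agents C/D, both D/C, or one of each.

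Next I would compute the monotonicity of the best-response-threshold map $\phi_i\colon\theta_{-i}^*\mapsto\theta_i^*$, which factors as a composition of two monotone pieces. The inner piece is $\zeta_i^C$ viewed as a function of $\theta_{-i}^*$: if $-i$ plays a D/C strategy, raising $\theta_{-i}^*$ lowers the probability that $-i$ plays $C$, so $\zeta_i^C$ is decreasing (Proposition~\ref{prop:zeta_monotonicity}); if $-i$ plays a C/D strategy, the symmetric argument makes it increasing. The outer piece $\theta_i^*(\zeta_i^C)$ is increasing, decreasing or constant according as $\Delta_i$ is positive, negative or zero (the monotonicity of the homography $\theta_i^*$). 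Thus the slope sign of $\phi_i$ is pinned down by the strategy type of $-i$ together with the sign of $\Delta_i$.

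Then I would discretize exactly as in Theorem~\ref{thm:finite_pure_ne}: partition $[0,1]$ by the finite type sets into intervals $I_i^0,\dots,I_i^{n_i}$ (all threshold strategies with threshold in one open interval are equivalent, hence induce the same $\zeta$ and the same best response), and define the transition functions $t_{1\to2}\colon\llbracket 0,n_1\rrbracket\to\llbracket 0,n_2\rrbracket$ and $t_{2\to1}\colon\llbracket 0,n_2\rrbracket\to\llbracket 0,n_1\rrbracket$. Since index order agrees with the order of thresholds, $t_{i\to -i}$ inherits the monotonicity of $\phi_{-i}$. A short case check across the three configurations --- using in each the inner sign forced by the strategy types and the outer sign forced by $\Delta_1,\Delta_2$ --- shows that (2) (same sign of $\Delta_i$ when the strategy types coincide, opposite when they differ) is exactly what makes $t_{1\to2}$ and $t_{2\to1}$ both increasing or both decreasing. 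Lemma~\ref{fixedpointintervals} then yields a pair $(k_1,k_2)$ with $k_1=t_{2\to1}(k_2)$ and $k_2=t_{1\to2}(k_1)$; choosing any thresholds in the interiors of $I_1^{k_1}$ and $I_2^{k_2}$ gives two pure threshold strategies that are mutual best responses, i.e.\ a pure Bayesian Nash Equilibrium of $G=\langle U,(\Theta_i,p_i)\rangle$.

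The main obstacle is the sign bookkeeping of the third step: it is routine rather than deep, but it is the only place hypothesis (2) is used, and one must combine the four cases correctly (decreasing $\zeta$ against a D/C opponent, increasing $\zeta$ against a C/D opponent, each crossed with the sign of $\Delta_i$). Minor loose ends to settle along the way: when $\Delta_i=0$ the corresponding transition function is constant, which is harmless since a constant map is compatible with either monotonicity in Lemma~\ref{fixedpointintervals}; a best-response threshold landing on a point of $\Theta_{-i}$ is assigned an interval by the same convention as in Theorem~\ref{thm:finite_pure_ne}; and one notes the relevant interiors are non-degenerate, so the extracted strategies are indeed pure.
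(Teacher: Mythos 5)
Your proposal is correct and takes essentially the same route as the paper's proof: hypotheses (1) and (2) are used to fix a single strategy type per agent and to force the two best-response threshold maps to share their monotonicity, after which the interval discretization and the fixed-point argument of Lemma~\ref{fixedpointintervals}, exactly as in Theorem~\ref{thm:finite_pure_ne}, deliver the pure equilibrium. Your write-up merely makes explicit the sign bookkeeping (inner monotonicity from the opponent's strategy type, outer from the sign of $\Delta_i$) that the paper's one-line proof leaves implicit.
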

\begin{proof}
    The two conditions imply that (1) agents have a unique strategy type and (2) their best response threshold functions have the same monotonicity, i.e., increasing or decreasing. Thus we can follow the same reasoning as we had for DGPD to prove the existence of a pure Bayesian NE.
\end{proof}

\subsection{Algorithmic results}\label{ALRe}

In this section, we develop efficient algorithms to find pure Bayesian Nash Equilibrium for finite type space $G_{2,2,2}$. The first part focuses on an optimized version for DGPD while the second one focuses on a more general version. In the third part, the complexity of both algorithms are evaluated and compared to each other. 

\subsubsection{Algorithm for DGPD}

Recall that the agents' best response sets are only made of threshold strategies. Thus, given a finite type space (with $n_i$ elements for agent $i$) the search space is made of $n_1 \times n_2$ threshold strategy pairs. The pure Bayesian Nash Equilibrium search consists in finding a fixed point (two strategies that are mutually best responses of each other) among those combinations.

\begin{figure}[ht]
\centering
\includegraphics[width = 0.7\hsize]{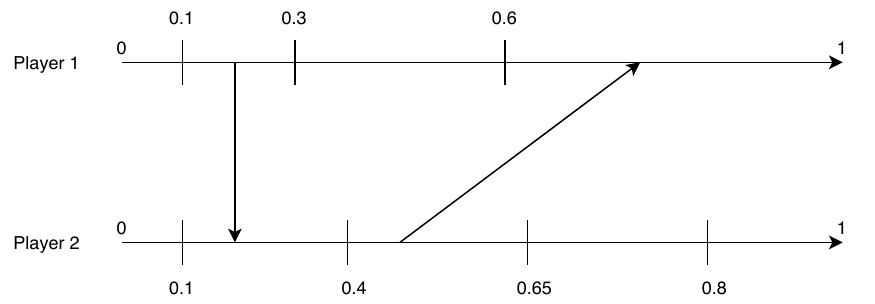}
\caption{If agent 1 plays a threshold strategy with $\theta_1^* \in [0.1, 0.3]$, agent 2's best response is a threshold strategy with $\theta_2^* \in [0.1, 0.4]$}
\label{fig:search_5}
\end{figure}

For clarity, we formulate a graphical method to represent the solution search that we call strategy diagram and looks like figure \ref{fig:search_5}. There are two unit intervals $[0,1]$, one for each agent's type space. An arrow from agent $i$'s interval $I_i$ to agent $-i$'s interval $I_{-i}$ indicates that if agent $i$ plays a threshold strategy with a threshold in $I_i$, the best response of agent $-i$ is a threshold strategy with a threshold in $I_{-i}$. A solution is then simply represented by two \textit{compatible arrows} as displayed in Figure~\ref{fig:search_6}.

\begin{figure}[ht]
\centering
\includegraphics[width = 0.7\hsize]{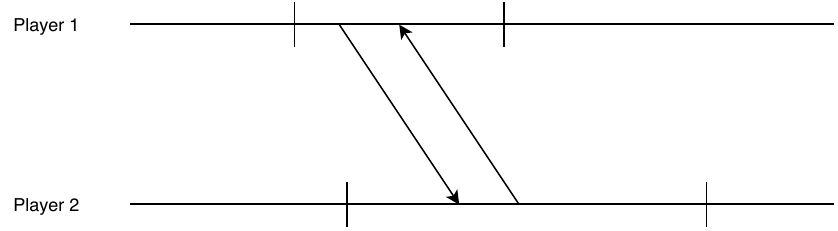}
\caption{A pure Bayesian NE represented by compatible arrows.}
\label{fig:search_6}
\end{figure}

We now introduce algorithm \ref{alg:non_symmetric} for NE search on finite DGPD. For every threshold strategy of agent $1$, we compute the associated best response of agent $2$ and then compute the best response of agent $1$ given the latter. Whenever for any of agent $1$'s threshold strategy the computed best response is equivalent, we obtain a pure NE. The procedure \texttt{compute\_cumul\_proba()} returns the cumulative probabilities given a probability distribution. \texttt{finder()} returns the index of the type space interval that contains a given threshold. \texttt{search\_space\_boundaries()} computes $\alpha = min(\mu, \lambda)$ and $\beta = max(\mu, \lambda)$ and then the associated indices in the type space. It helps us optimize the overall algorithm by reducing the search space to thresholds $\theta_i^* \in [\alpha, \beta]$. Finally, \texttt{threshold\_i()} is the threshold function of agent $i$.

\begin{algorithm}
\caption{Exhaustive NE search}
\begin{algorithmic}
\REQUIRE $t, r, y, p, s, \Theta_1, \Theta_2, p_1, p_2$

\STATE $cumul\_proba\_1 \leftarrow compute\_cumul\_proba(p_1)$
\STATE $cumul\_proba\_2 \leftarrow compute\_cumul\_proba(p_2)$
\STATE $start\_1, end\_1 \leftarrow search\_space\_boundaries(t, r, y, p, s, \Theta_1)$
\STATE
\FOR{$start\_1 \leq i < end\_1 + 1$}
\STATE $\zeta^C_2 \leftarrow 1 - cumul\_proba\_1[i]$
\STATE $\theta^*_2 \leftarrow threshold\_2(\zeta^C_2)$
\STATE $k \leftarrow finder(\Theta_2, \theta^*_2)$
\STATE $\zeta^C_1 \leftarrow 1 - cumul\_proba\_2[k]$
\STATE $\theta^*_1 \leftarrow threshold\_1(\zeta^C_1)$

\IF{$\Theta_1[i-1] \leq \theta^*_1 \leq \Theta_1[i]$}
\RETURN $\theta^*_1$, $\theta^*_2$
\ENDIF

\ENDFOR
\RETURN False

\end{algorithmic}
\label{alg:non_symmetric}
\end{algorithm}

Figures \ref{fig:search_1} and \ref{fig:search_2} illustrate the finding of NE search for two situations that we encounter. Note that when $\lambda < \mu$ we reverse the $2^{nd}$ agent's axis as the best response threshold monotonicity is reversed compared to $\lambda > \mu$. Also note that there could be multiple solutions (Figure \ref{fig:search_2}). If we only seek one solution then we stop at the first finding to reduce the computational cost.

\begin{figure}[ht]
\centering
\includegraphics[width = 0.7\hsize]{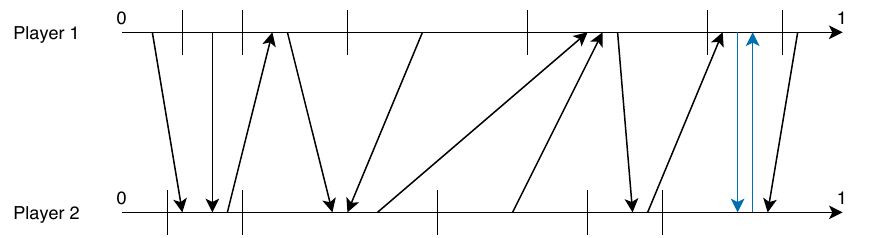}
\caption{Strategy diagram when $\lambda > \mu$}
\label{fig:search_1}
\end{figure}

\begin{figure}[ht]
\centering
\includegraphics[width = 0.7\hsize]{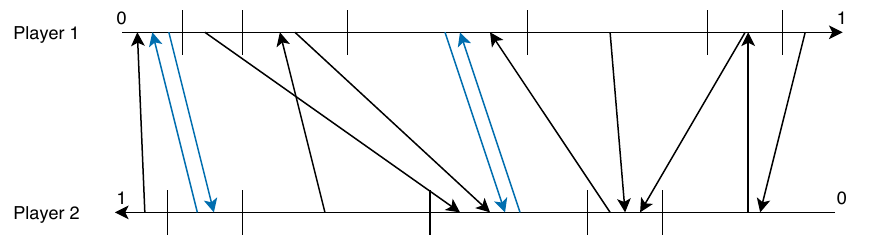}
\caption{Strategy diagram when $\lambda < \mu$}
\label{fig:search_2}
\end{figure}

\subsubsection{General algorithm}

We now describe a different algorithm that can handle a broader range of games but is more expensive. It can also handle DGPD games but is less optimal.

\begin{figure}[h]
\centering
\includegraphics[width = 0.95\hsize]{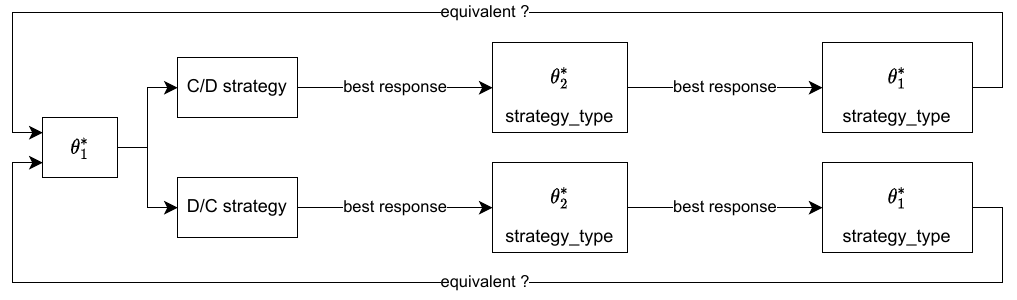}
\caption{Search process for the discrete Double Game}
\label{fig:double_game_process}
\end{figure}

Apart from the fact that we have to explore both C/D and D/C strategies, the double game algorithm is very similar to the DGPD algorithm. For each type space interval of agent 1, we compute the best response of agent 2 and the best response of agent 1 given the latter. We check that we fall back to the initial strategy by ensuring that the thresholds are in the same interval and that the strategy types are the same. Of course, we cannot benefit from the reduced search space with $\lambda$ and $\mu$ as they make no sense in the general context. The overall procedure is summarized in Figure~\ref{fig:double_game_process}.

Given this general algorithm we explore some properties of payoff matrices. We generated hundreds of random payoff matrices and thousands of type space configurations and ran Nash Equilibrium searches. We experimentally classified the payoff matrices thanks to the NE search results. Interestingly, we notice that none of the generated matrices belongs to the solutionless set as we postulated earlier. Secondly, we find that there exist matrices in the full set that do not satisfy the conditions of Proposition~\ref{prop:full_matrix_conditions}: the set of conditions is sufficient, but not necessary, for belonging to the full set.

\begin{figure}[h]
    \centering
    \includegraphics[width = 0.49\hsize]{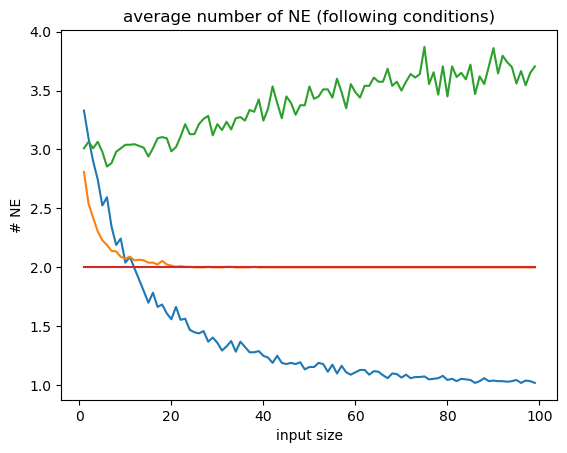}
    \includegraphics[width = 0.49\hsize]{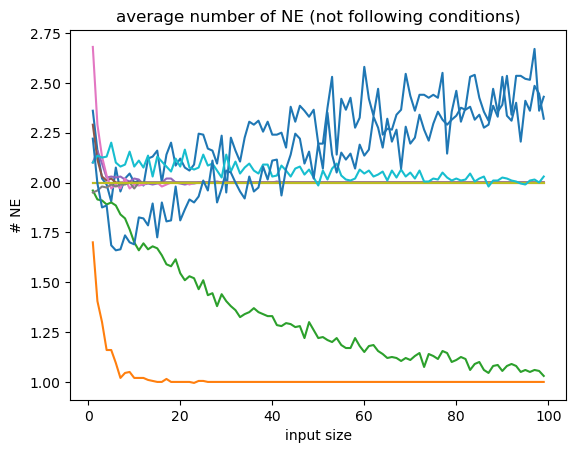}
    \caption{Average number of solutions with respect to the input size of both agents. Each curve corresponds to a payoff matrix. Among matrices belonging to the full set, we distinguish those satisfying the conditions of Proposition~\ref{prop:full_matrix_conditions} (left) from those not satisfying them (right).}
    \label{fig:full_matrix_avg_solutions}
\end{figure}

Next, using a modified version of the NE search that doesn't stop at the first solution (if it exists) we computed the average number of solutions for different type space configurations. Let {\em input size} denote  the number of elements in the type space for both players. In this context, $n = card(\Theta_1)$ and $m = card(\Theta_2)$.

Figure~\ref{fig:full_matrix_avg_solutions} shows the results for a sample of payoff matrices randomly picked from the full set (thanks to our previous classification). The first trivial result is that the average never goes under 1 (otherwise it wouldn't belong to the full set). Secondly, it seems that the average number of solution can either be constant, increase or decrease with the input size. In many cases we even observe that the average number tends to stabilize around an arbitrary value which seems to be an integer.

\begin{figure}[h]
    \centering
    \includegraphics[width = 0.55\hsize]{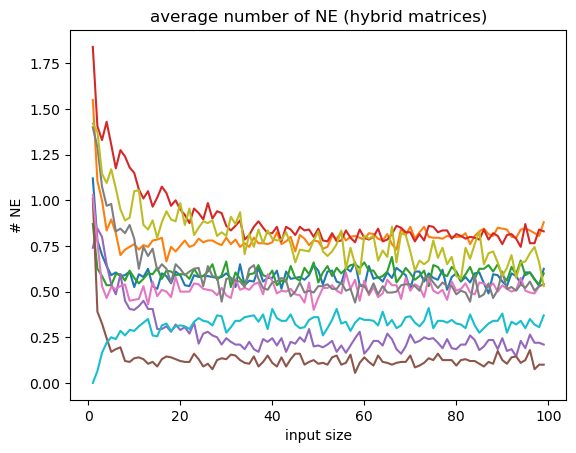}
    \caption{Average number of solutions with respect to the input size of both agents. Each curve corresponds to a payoff matrix from the hybrid set.}
    \label{fig:hybrid_matrix_avg_solution}
\end{figure}

We repeated this experiment with payoff matrices randomly picked from the hybrid set. Figure~\ref{fig:hybrid_matrix_avg_solution} shows a sample of results for such matrices. As we said earlier for the full set, it seems that the averages can either increase, decrease or stay constant and that they converge to different values. This time, those values are smaller than 1 and may not be integers.

\subsubsection{Complexity comparison}

In this part, we discuss the complexity of each algorithm and compare their performances.

First of all, the left part of Figure~\ref{fig:dgpd_complexity_comparison} displays the complexity of DGPD NE search for different variations of inputs. When $n=m$, both input sizes vary, while when $n=10$ or $m=10$ only input size varies. Notice that when the first agent's input size is constant, the complexity is also constant. When $n$ varies, the complexity is linear and is almost independent of whether $m$ varies or not. In fact, the main driver of the complexity is the main loop that iterates over the elements of $\Theta_1$. The procedure \texttt{finder()} has a low computational cost as it consists in a binary search. In practice, for imbalanced type space size, one should always consider the agent having the smallest type space as the first agent.

\begin{figure}[h]
    \centering
    \includegraphics[align=t,width = 0.49\hsize]{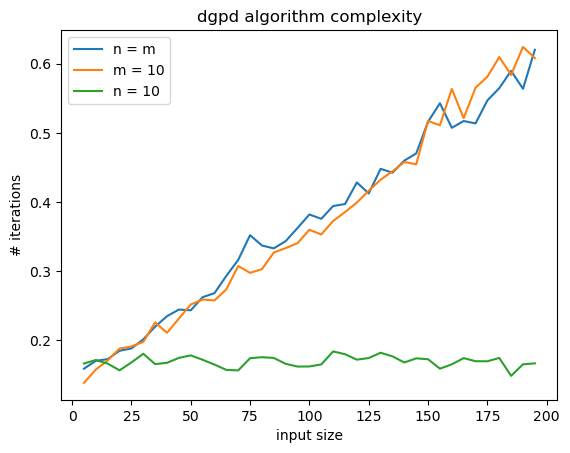}
    \includegraphics[align=t,width = 0.49\hsize]{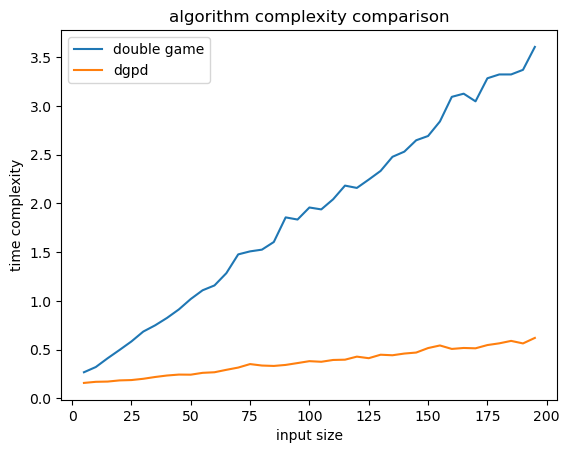}
    \caption{DGPD time complexity (left) and comparison with general algorithm (right)}
    \label{fig:dgpd_complexity_comparison}
\end{figure}

The right part of Figure~\ref{fig:dgpd_complexity_comparison} illustrates the improved performance of DGPD specific algorithm over the more general one. The latter also has a linear complexity but with a much higher slope. One explanation comes from the fact that for each element of $\Theta_1$ we consider both C/D and D/C strategies as starting points. Also, we don't reduce the search space with bounds like $\mu$ and $\lambda$.

\begin{figure}[h]
    \centering
    \includegraphics[align=t,width = 0.49\hsize]{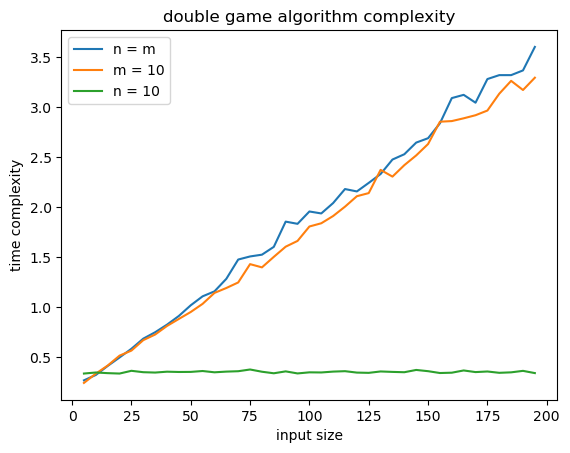}
    \includegraphics[align=t,width = 0.49\hsize]{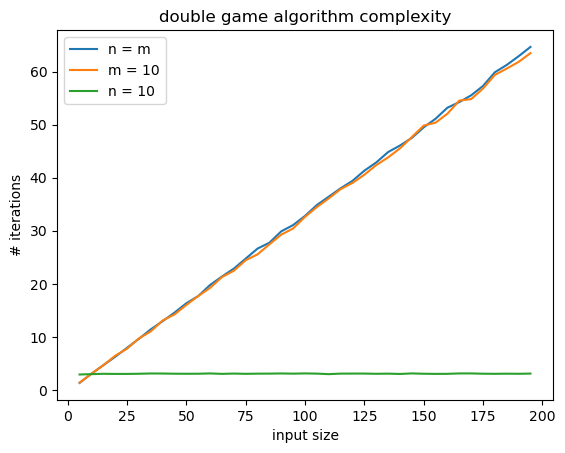}
    \caption{Complexity comparison for different input sizes through computation time (left) and average number of iterations (right)}
    \label{fig:double_game_complexity_comparison}
\end{figure}

For the general algorithm, we also compare the complexity for different variations of the input size as display in Figure~\ref{fig:double_game_complexity_comparison} (left). Again the complexity heavily relies on the first agent's type space size. In contrast to the DGPD algorithm, we notice that for $m=10$ the complexity is slightly below the complexity when $n=m$. On the right side of Figure~\ref{fig:double_game_complexity_comparison} we see that this behaviour cannot be explained by a difference in the number of iterations on the main loop. It is probably due to the cost of \texttt{finder()} that increases in $\mathcal{O}(\log(m))$ when $m$ increases. This effect might also affect the DGPD algorithm but is not significant in our experiment.

\section{Conclusion}

In this paper, we explored pure Bayesian Nash Equilibrium existence for a subset of uniform multigames. We made the distinction between games according to their type space, either continuous or discrete.

For continuous type space games with $2$ actions, we showed in Theorem \ref{thm:practical_continuous_theorem} the existence of a pure Bayesian Nash Equilibrium when there are local games having a strictly dominant strategy for each agent. We illustrated its application through the DGPD and SDAP examples that can model real situations with more precision than toy examples usually presented. In Section \ref{algRes_cont} we formulated a methodology to solve $2$ action games with any kind of prior.

For finite type space DGPD, we showed in Theorem \ref{thm:finite_pure_ne} the existence of a pure Bayesian Nash Equilibrium. Following this, we were able to provide efficient algorithms to find pure Bayesian Nash Equilibrium and explore experimentally our classification of discrete Double Multigames (Proposition \ref{prop:full_matrix_conditions}).

Threshold Strategy is a core concept developed for both the continuous and the discrete type space games with $2$ actions. As we saw, a threshold strategy is fully characterized by its threshold and defines 3 regions in the type space: one associated to a pure action C, one associated to a pure action D and one associated to a mix. By construction, a best response must be a threshold strategy. The threshold strategies presented for the DGPD are the more basic version where we have: play D if $\theta_i < \theta_i^*$, play C if $\theta_i > \theta_i^*$ and a mix of both if $\theta_i = \theta_i^*$.

As for future work, we could try to extend this notion to more that 2 actions. In this case, we would consider the $\argmax_{a_i \in A_i} \vec\theta_i \cdot (\overline{u}_{ij}(a_i,\sigma_{-i}))_{j \in J}$ to determine the action played by agent $i$ with type $\vec\theta_i$. When the highest value is reached by two or more actions $a_i$, the response of agent $i$ would be a mixed combination of such actions. This definition would still use the fact that a Nash Equilibrium solution (be it pure or mixed) must be made of threshold strategies.

\newpage

\end{document}